\date{}
\author{Tomohiro I \and Dominik K\"{o}ppl}
\newcommand*{\instancename}[1]{\ensuremath{\mathsf{#1}}} \newcommand*{\functionname}[1]{{\ensuremath{\renewcommand{\rmdefault}{ptm}\fontfamily{ppl}\selectfont\textrm{\textup{#1}}}}} 
\newcommand*{\fnPred}{\functionname{predecessor}}
\newcommand*{\fnDelete}{\functionname{delete}}
\newcommand*{\fnInsert}{\functionname{insert}}
\newcommand*{\fnAccess}{\functionname{access}}
\definecolor{teigiIro}{HTML}{5700B5}
\newcommand*{\teigi}[1]{{\color{teigiIro}\emph{#1}}} 
\begin{document}
\title{Load-Balancing Succinct B Trees}
\maketitle

\begin{abstract}
	We propose a B tree representation storing $n$ keys, each of $k$ bits, 
	in either 
	(a) $nk + \Oh{nk / \lg n}$ bits or
	(b) $nk + \Oh{nk \lg \lg n/ \lg n}$ bits 
	of space 
	supporting all B tree operations in either
	(a) \Oh{\lg n } time or
	(b) \Oh{\lg n / \lg \lg n} time, respectively.
	We can augment each node with an aggregate value such as the minimum value within its subtree,
	and maintain these aggregate values within the same space and time complexities.
	Finally, we give the sparse suffix tree as an application, 
	and present a linear-time algorithm computing the sparse longest common prefix array from the suffix AVL tree of Irving et al.~[JDA'2003].
\end{abstract}

\section{Introduction}

A B tree~\cite{bayer70organization} is the most ubiquitous data structure found for relational databases
and is, like the balanced binary search tree in the pointer machine model, 
the most basic search data structure in the external memory model.
A lot of research has already been dedicated for solving various problems with B trees, and various variants of the B tree have already been proposed (cf.~\cite{graefe11btree} for a survey).
Here, we study a space-efficient variant of the B tree in the word RAM model under the context of a dynamic predecessor data structure, which provides the following methods:

\begin{description}
	\item[\fnPred($K$)] returns the predecessor of a given key~$K$ (or $K$ itself if it is already stored);
	\item[\fnInsert($K$)] inserts the key~$K$; and \item[\fnDelete($K$)] deletes the key~$K$.
\end{description}

Nowadays, when speaking about B trees we actually mean B+ trees~\cite[Sect.~3]{comer79ubiquitous} 
(also called \teigi{leaf-oriented B-tree}~\cite{bille18partial}),
where the leaves store the actual data (i.e., the keys).
We stick to this convention throughout the paper.

\subsection{Related Work}
The classic B tree as well its B+ and B$^*$ tree variants support the above methods in \Oh{\lg n} time, 
while taking $\Oh{n}$ words of space for storing $n$ keys.
Even if each key uses only $k = \oh{\lg n}$ bits, the space requirement keeps the same since its pointer-based tree topology already needs \Oh{n} pointers.
To improve the space while retaining the operational time complexity is main topic of this article.
However, this is not a novel idea:

The earliest approach we are aware of is due to \citet{blandford04compact} who proposed a representation of the leaves as blocks of size \Ot{\lg n}. 
Assuming that keys are integer of $k$ bits,
they store the keys not in their plain form, but by their differences encoded with Elias-$\gamma$ code~\cite{elias74code}.
Their search tree takes \Oh{n \lg ((2^k+n)/n)} bits while conducting B tree operations in \Oh{\lg n} time.

More recently, \citet{prezza17dynamic} presented a B tree whose leaves store between $b/2$ and $b$ keys for $b =\lg n$.
Like~\cite[Sect.~3]{bille18partial} or \cite[Thm.~6]{dietz89optimal}, 
the main aim was to provide prefix-sums by augmenting each internal node of the B tree with additional information about the leaves in its subtree such as the sum of the stored values.
Given $m$ is the sum of all stored keys plus $n$,
the provided solution uses $2n \lg (m/n) + \lg \lg n + \Oh{\lg m / \lg n}$ bits of space and supports 
B tree operations as well as prefix-sum in \Oh{\lg n} time.
This space becomes $2nk + \oh{n}$ if we store each key in plain $k$ bits.

Data structures computing prefix-sums are also important for dynamic string representations~\cite{he10succinct,navarro14dynamic,munro15compressed}.
For instance, \citet{he10succinct} use a B tree as underlying prefix-sum data structure for efficient deletions and insertions of characters into a dynamic string.
If we omit the auxiliary data structures on top of the B tree to answer prefix-sum queries, 
their B tree uses $nk + \Oh{nk / \sqrt{\lg n}}$ bits of space while supporting B tree operations in \Oh{\lg n / \lg \lg n} time, an improvement over the \Oh{\lg n} time of the data structure of \citet[Thm.~1]{gonzalez09rank} sharing the same space bound.
In the static case, \citet{delpratt07compressed} studied compression techniques for a static prefix-sum data structure.

Asides from prefix-sums, another problem is to maintain a set of strings,
where each node~$v$ is augmented with the length of the longest common prefix (LCP) among all strings stored as satellite values in the leaves of the subtree rooted at~$v$~\cite{ferragina99stringbtree}.

When all keys are distinct, the implicit dictionary of \citet{franceschini06optimal} supports 
\Oh{\lg n} time for \fnPred{} and \Oh{\lg n} amortized time for updates (\fnDelete{} and \fnInsert)
while using only constant number of words of extra space.
Allowing duplicate keys, \citet{katajainen10compact} presented a data structure with the same time bounds
but using \Oh{n \lg \lg n / \lg n} bits of extra space.

\subsection{Our Contribution}

Our contribution (cf. \cref{secSpaceEfficient}) is a combination of a generalization of the rearrangement strategy of the B$^*$ tree with the idea to enlarge the capacity of the leaves similarly to some approaches listed in the related work.
With these techniques we obtain:

\begin{theorem}\label{thmBTree}
	There is a B tree representation storing $n$ keys, each of $k$ bits, in $nk + \Oh{nk / \lg n}$ bits of space,
	supporting all B tree operations in \Oh{\lg n} time.
\end{theorem}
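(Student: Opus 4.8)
The plan is to construct a leaf-oriented B tree in which (i) the leaves are enlarged until the internal routing structure becomes a lower-order term, and (ii) the rearrangement strategy of the B$^*$ tree is generalized so that leaves stay not merely half full but $1-\oh{1}$ full, which is exactly what forces the leading term of the space bound down to $nk$.

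Concretely, I would give each leaf a capacity of $L=\Theta(\lg^2 n/k)$ keys, i.e.\ $\Theta(\lg^2 n)$ bits, and maintain the invariant that, averaged over any window of $\Theta(\lg n)$ consecutive leaves, a leaf holds between $L-\Oh{L/\lg n}$ and $L$ keys. There are then $\Oh{nk/\lg^2 n}$ leaves, so the internal tree — taken with constant fan-out, hence depth $\Oh{\lg n}$ — has $\Oh{nk/\lg^2 n}$ nodes, each storing a constant number of child pointers ($\Oh{\lg n}$ bits) together with separator keys (costing $\Oh{k}$ bits each when $k=\Oh{\lg n}$, and otherwise read on demand from the child). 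The routing structure thus occupies $\Oh{nk/\lg n}$ bits, while the $\Oh{L/\lg n}$ slack permitted per leaf contributes only $\Oh{(nk/\lg^2 n)\cdot(L/\lg n)\cdot k}=\Oh{nk/\lg n}$ bits, giving $nk+\Oh{nk/\lg n}$ in total. (When $k$ is so large that even a plain pointer-based B tree meets the bound no new idea is needed, so I would assume $k=\Oh{\lg n}$.)

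For the operations, \fnPred{} descends the internal tree in $\Oh{\lg n}$ time and then binary-searches the $\Theta(\lg^2 n/k)$ sorted keys of a single leaf in $\Oh{\lg\lg n}$ comparisons; \fnInsert{} and \fnDelete{} locate the target leaf the same way and update it in $\Oh{\lg n}$ word operations, since a leaf spans $\Theta(\lg n)$ machine words. The delicate part is restoring the fullness invariant afterwards. When an insertion would overflow a leaf, rather than splitting it into two half-full leaves (which would wreck the space bound), I would inspect a window of $w=\Theta(\lg n)$ consecutive leaves: if their keys fit into $w$ leaves of size at most $L$, redistribute them accordingly and create no new leaf; only when the whole window is essentially saturated do we spend one extra leaf, repartitioning $w$ nearly full leaves into $w+1$ leaves each filled to a $\tfrac{w}{w+1}=1-\Theta(1/\lg n)$ fraction of capacity, and splicing the new leaf into the internal tree with the usual B tree splits propagating to the root. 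Deletions are symmetric, merging a window once it gets too sparse.

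The main obstacle is paying for this rearrangement. A single redistribution rewrites $w$ leaves totalling $\Theta(\lg^3 n)$ bits, i.e.\ $\Theta(\lg^2 n)$ word operations — a $\lg n$ factor above the budget. The analysis therefore has to exhibit a potential function that charges the slack introduced by each split, show that this forces a full-window redistribution only once per $\Omega(\lg n)$ updates inside that region (so the amortized cost is $\Oh{\lg n}$), and simultaneously certify that the leaf count never exceeds $(1+\Oh{1/\lg n})$ times $n/L$ — which is precisely the fullness invariant the space bound relies on. If the stated $\Oh{\lg n}$ bound is meant in the worst case, a concluding de-amortization step (spreading each window redistribution over the following $\Theta(\lg n)$ updates via a shadow copy of the affected window) is also needed; by contrast, the low-level bookkeeping — aligning leaves and windows to $\Theta(\lg n)$-word boundaries so that every scan and shift stays within time $\Oh{\lg n}$, and handling the degenerate cases of very small $n$ or fewer than $w$ leaves — is routine.
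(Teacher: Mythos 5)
Your overall plan matches the paper's: enlarge leaves to $\Theta(\lg^2 n)$ bits, keep leaves $(1-\Theta(1/\lg n))$-full by sharing capacity among $\Theta(\lg n)$ siblings before splitting, and observe that the routing tree and the permitted slack both contribute only $\Oh{nk/\lg n}$ bits. The space accounting is essentially right.

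The gap is in the rebalancing cost. You write that a redistribution of $w=\Theta(\lg n)$ consecutive leaves costs $\Theta(\lg^2 n)$ word operations, and propose to amortize this by showing a \emph{repartition that creates a new leaf} happens only once per $\Omega(\lg n)$ updates. But in your scheme \emph{every} overflow triggers a full-window rewrite, not just the ones that create a new leaf, and those are the frequent events. After an even redistribution, each leaf in the window holds slack $\Theta(L/\lg n)=\Theta(\lg n/k)$, which is $\Theta(1)$ keys when $k=\Theta(\lg n)$, so a constant fraction of the subsequent insertions lands on a leaf that is already full and triggers another $\Theta(\lg^2 n)$ rewrite. A potential function over the total slack cannot rescue this: the total work to exhaust a window's $\Theta(L)$ slack via even redistributions is $\Theta(L\cdot\lg^2 n)$, i.e.\ $\Theta(\lg n)$ amortized \emph{per slack unit}, but in keys that is $\Theta(\lg^2 n)$ amortized per insertion. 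De-amortization only spreads the work, it does not reduce it, so it cannot help either.

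The missing idea is to make the inter-leaf shift cheap so that no window-wide rewrite is ever needed. The paper represents each leaf array as a \emph{circular buffer}, which supports removing the last element and prepending a new first element in $\Oh{1}$ time (in addition to the usual append/remove-last). Then an insertion into a full leaf $\ell$ pays $\Oh{bk/w}=\Oh{\lg n}$ once for the word-packed insertion into $\ell$'s buffer, and afterwards the single overflow key is cascaded to the next sibling by a remove-last/prepend pair, $\Oh{1}$ per sibling, for at most $q=\Theta(\lg n)$ siblings; only if all $q$ siblings are full is one leaf split into two. Every step, including updating the $\Oh{\lg n + q}$ affected ancestors, is $\Oh{\lg n}$ \emph{worst case} — no amortization or shadow copies. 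Deletions are handled symmetrically under an invariant phrased per leaf rather than per window (among the $q$ siblings of a non-full leaf there is at most one other non-full leaf, and a leaf is deleted only when it becomes empty), which avoids the bookkeeping around window boundaries that your proposal leaves unspecified. Finally, the paper assumes $k=\Oh{w}$ and sets the leaf capacity to $b=w\lg n/k$, so it does not need your restriction $k=\Oh{\lg n}$ nor the on-demand-separator workaround for larger $k$.
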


We stress that this representation does not compress the keys, which can be advantageous if keys are not simple data types but for instance pointers to complex data structure 
such that equality checking cannot be done by merely comparing the bit representation of the keys.
In this setting of incompressible keys, the space of a \emph{succinct} data structure supporting \fnPred{}, \fnInsert{}, and \fnDelete{} is $nk + \oh{nk}$ bits for storing $n$ keys.

We present our space-efficient B tree in \cref{secSpaceEfficient}.
Additionally, we show that we can augment our B tree with auxiliary data such that we can address the prefix-sum problem and LCP queries without worsening the query time (cf.~\cref{secAggregatedValues}).
In \cref{secAcceleration}, we show that we can speed up our B tree with a technique used by \citet{he10succinct} leading to \cref{thmBTreeFast}.
In \cref{secSparseSuffixTree}, we give the sparse suffix problem as an application, and compare our solution with the suffix AVL tree~\cite{irving03suffixavl}, 
for which
we propose an algorithm to compute the sparse LCP array.

\section{Preliminaries}
Our computational model is the word RAM model with a word size of $w$ bits.
We assume that a key uses $k = \Oh{w}$ bits, and that
we can compare two keys in \Oh{1} time. 
More precisely, we support the comparison to be more complex than just comparing the $k$-bit representation bitwise as long as it can be evaluated within constant time.
Let $n = \Oh{2^w}$ be the number of keys we store at a specific, fixed time.

\begin{figure}
	\centering{\includegraphics[scale=1.0]{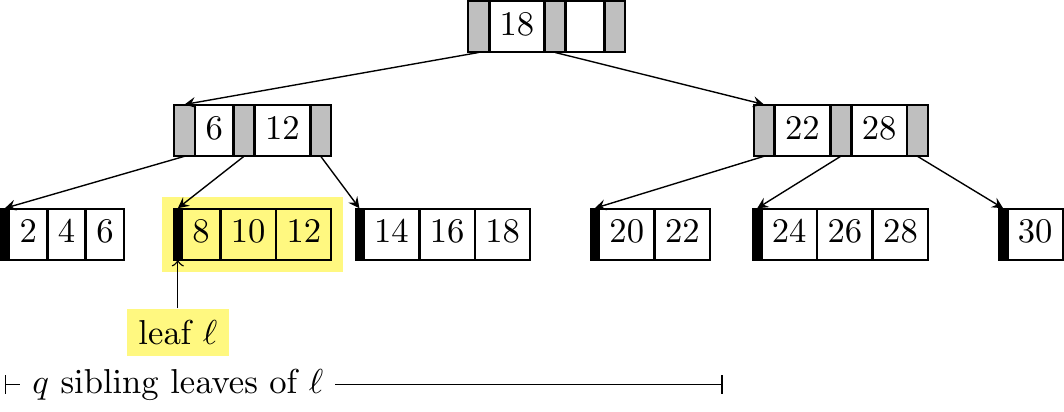}
	}\caption{A B+ tree with degree~$t=3$ and height~$3$. 
	A child pointer is a gray field in the internal nodes.
	An internal node stores $t-1$ integers where the $i$-th integer with value~$j$ regulates that only those keys of at most~$j$ go to the children in the range from the first up to the $i$-th child.
	In what follows (\cref{figInsertion}), we consider inserting the key~$9$ into the full leaf~$\ell$ (storing the keys $8$, $10$, and $12$), 
and propose a strategy different from splitting $\ell$ by considering its $q = 3$ siblings.}
	\label{figBTree}
\end{figure}

A B+ tree of degree~$t$ for a constant $t \ge 3$ is a rooted tree whose nodes have an out-degree between $\upgauss{t/2}$ and $t$.
See \cref{figBTree} for an example.
All leaves are on the same height, which is \Ot{\lg n} when storing $n$ keys.
The number of keys each leaf stores is between $\gauss{t/2}$ and $t$ (except if the root is a leaf). 
Each leaf is represented as an array of length~$t$; each entry of this array has $k$ bits. We call such an array a \teigi{leaf array}.
Each leaf additionally stores a pointer to its preceding and succeeding leaf.
Each internal node~$v$ stores an array of length~$t$ for the pointers to its children, 
and an integer array~$I_v$ of length~$t-1$ to distinguish the children for guiding a top-down navigation.
In more detail, 
$I[i]$ is a key-comparable integer such that all keys less than $I_v[i]$ are stored in the subtrees rooted at the left siblings of the $i$-th child of~$v$.
Since the integers of~$I_v$ are stored in ascending order,
to know in which subtree below~$v$ a key is stored, we can perform a binary search on~$I_v$.

A root-to-leaf navigation can be conducted in \Oh{\lg n} time, since there are \Oh{\lg n} nodes on the path from the root to any leaf, 
and selecting a child of a node can be done with a linear scan of its stored keys in $\Oh{t} = \Oh{1}$ time.

Regarding space, each leaf stores at least $t/2$ keys.
So there are at most $2n/t$ leaves.
Since a leaf array uses $kt$ bits, the leaves can use up to $2nk$ bits.
This is at most twice the space needed for storing all keys in a plain array.
In what follows, we provide a space-efficient variant:

\begin{figure}
	\centering{\includegraphics[scale=1.0]{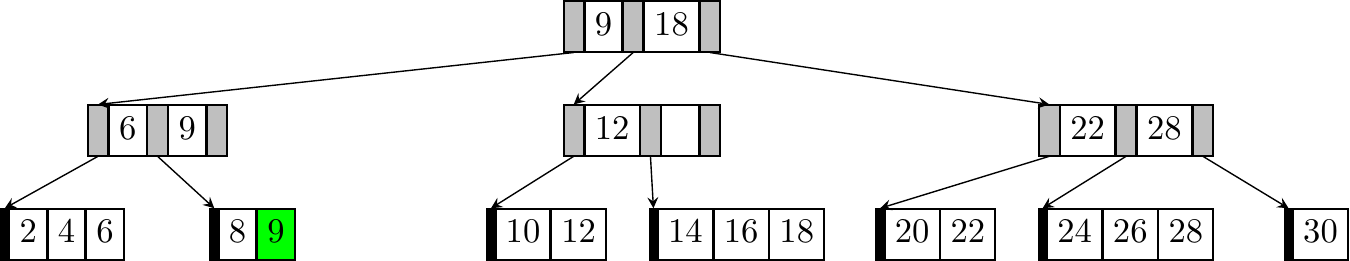}

		\vspace{1em}

		\includegraphics[scale=1.0]{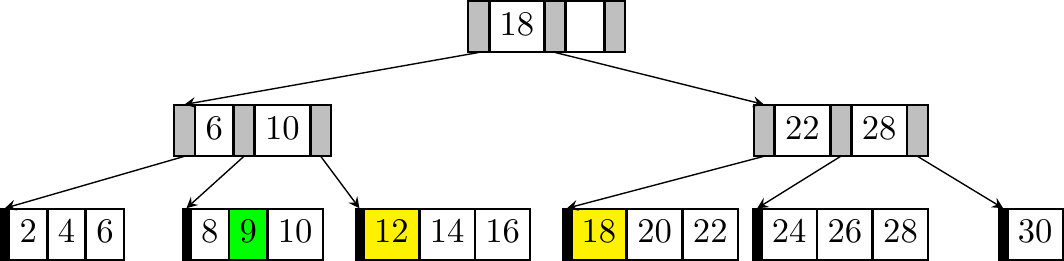}
	}\caption{\cref{figBTree} after inserting the key~9 into the leaf~$\ell$. 
		Top: The classic B+ and B$^*$ variants split $\ell$ on inserting~$9$, causing its parent to split, too.
		Bottom: In our proposed variant (cf.~\cref{secSpaceEfficient}) for $q \ge 3$, we shift the key~$12$ of $\ell$ to its succeeding leaf, 
from which we shift the key~$18$ to the next succeeding leaf, which was not yet full.}
	\label{figInsertion}
\end{figure}

\section{Space-Efficient B Trees}\label{secSpaceEfficient}
To obtain a space-efficient B tree variant, we apply several ideas.
We start with the idea to share keys among several leaves (\cref{secKeySharing}) to maintain the space of the leaves more economically.
We can adapt this technique for leaves maintaining a \emph{non-constant} number of keys efficiently (\cref{secShiftingKeys}).
However, for such large leaves we need to reconsider how and when to delete them (\cref{secManageLeaf}),
leading to the final space complexity of our proposed data structure (\cref{secFinalSpaceComplexity}) and \cref{thmBTree}.

\subsection{Key Sharing}\label{secKeySharing}
Our first idea is to keep the leaf arrays more densely filled.
For that, we generalize the idea of B$^*$ trees~\cite[Sect.~6.2.4]{knuthArt3Sorting}:
The B$^*$ tree is a variant of the B tree (more precisely, we focus on the B+ tree variant) with the aim to defer the split of a full leaf on insertion by rearranging the keys with a dedicated sibling leaf.
On inserting a key into a full leaf, we try to move a key of this leaf to its dedicated sibling.
If this sibling is also full, we split both leaves up into three leaves, 
each having $2/3 \cdot t$ keys on average~\cite[Sect.~6.2.4]{knuthArt3Sorting}.
Consequently, we have the lower bound of $2/3 \cdot t$ keys on the number of keys per leaf.
We can generalize this bound by allowing a leaf to share its keys with $q \in \Ot{\lg n}$ siblings:
Now a split of a leaf only occurs when all its $q$ siblings are already full.
After splitting these $q+1$ leaves into $q+2$ leaves,
each leaf stores $t \cdot q/(q+1)$ keys on average.
Consequently, all leaves of the tree use up to 
\begin{equation}\label{eqLeavesBits}
nk(q+1)/q = nk + nk/q = n k + \Oh{n k / \lg n}
\text{~bits for~}
q \in \Ot{\lg n}.
\end{equation}
Since each leaf stores up to $t = \Oh{1}$ keys, shifting a key to one of the $q$ siblings takes \Oh{q} time.
That is because, for shifting a key from the $i$-th leaf to the $j$-th leaf with $i < j$, we need to 
move the largest key stored in the $g$-th leaf to the $(g+1)$-th leaf for $g \in [i..j)$ (the moved key becomes the smallest key stored in the $(g+1)$-th leaf, cf.~\cref{figInsertion}).
Since a shift changes the entries of $\Oh{q}$ leaves, we have to update the information of those leaves' ancestors.
There are at most $\sum_{h=1}^{\lg n} \upgauss{q (t/2)^{-h}} = \Oh{\lg n + q}$ many such ancestors, 
and all of them can be updated in time linear to the tree height.
Thus, we obtain a B$^*$ tree variant with the same time complexities, but higher occupation rates of the leaves.

Finally, it is left to deal with deletions, which can be done symmetrically by changing the lower bound of keys a leaf stores to roughly $t \cdot q/(q+1)$ keys.
Whenever we are about to delete a key of a leaf~$\ell$ storing this minimum number of keys, 
we first try to shift a key of one of $\ell$'s $q$ sibling nodes to $\ell$.
If all these siblings also store the minimum number of keys, 
we eventually remove $\ell$ and distribute $\ell$'s keys among a constant number of $\ell$'s siblings.

\subsection{Shifting Keys Among Large Leaves}\label{secShiftingKeys}
Next, we want to slim down the entire B$^*$ tree by shrinking the number of internal nodes. 
For that, we increase the number of elements a leaf can store up to $b := w \lg n / k$.
Since a leaf now maintains a large number of keys, 
shifting a key to one of its $q$ neighboring sibling leaves takes $\Oh{bqk/w} = \Oh{\lg^2 n}$ time. 
That is because, 
for an insertion into a leaf array, 
we need to shift the stored keys to the right to make space for the key we want to insert.
We do not want to shift the keys individually since this would take \Oh{b} total time.
Instead, we can shift \Ot{w/k} keys in constant time by using word-packing, 
yielding \Oh{bk/w} time for an insertion into a leaf array.

\begin{figure}[t]
	\centering{\includegraphics[scale=1.0]{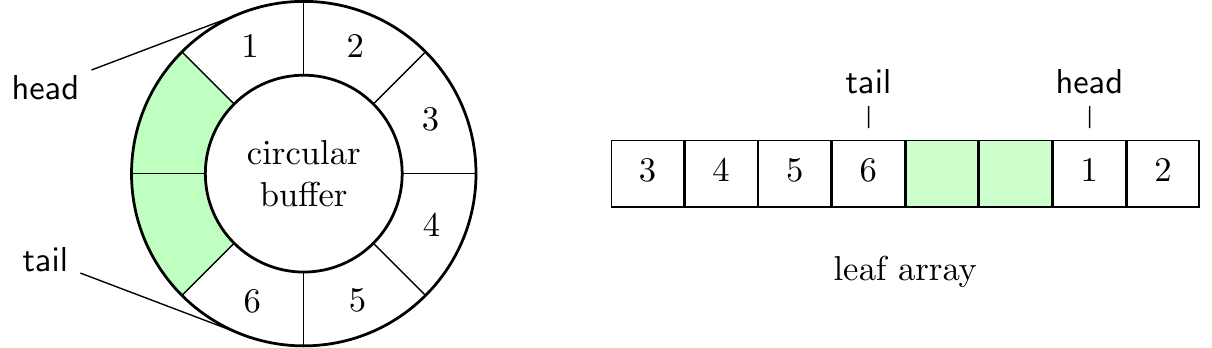}
	}\caption{A circular buffer representation of a leaf array capable of storing 8 keys.
		The pointers \textsf{head} and \textsf{tail} support prepending a key, removing the first key, appending a key, and removing the last key, all in constant time. 
	The right figure shows that the circular buffer is actually implemented as a plain array with two pointers.}
	\label{figCircularBuffer}
\end{figure}

To retain the original \Oh{\lg n} time bound, we represent the leaf arrays by \teigi{circular buffers}.
A circular buffer supports, additionally to removing or adding the last element in constant time as in a standard array,
the same operations for the first element in constant time as well. See \cref{figCircularBuffer} for a visualization.
For an insertion elsewhere, we still have to shift the keys to the right.
This can be done in $\Oh{bk/w} = \Oh{\lg n}$ time with word-packing as described above for the plain leaf array.
Finally, on inserting a key into a full leaf~$\ell$, 
we pay $\Oh{\lg n}$ time for the insertion into this full leaf,  
but subsequently can shift keys among its sibling leaves in constant time per leaf.

\subsection{Deleting Large Leaves}\label{secManageLeaf}
For supporting creating and deleting leaves efficiently, we face the following problem:
If we stick to $b \cdot q/(q+1)$ keys as the lower bound on the number of keys a leaf can store (replacing $t$ by $b$ in the bound described in \cref{secKeySharing}), 
we no longer can sustain our claimed time complexity when repetitively deleting freshly created leaves:
When deleting a leaf, we would need to move its remaining $\Ot{b}$ keys to its $q$ siblings, which would take $\Oh{bqk/w} = \Oh{n^2}$ time.
However, we would like to have a sufficiently large lower bound such that the claimed space bound of \cref{thmBTree} still holds.
Unfortunately, we have not found a way to impose such a lower bound~$m$ on each leaf without sacrificing the running time,
since a deletion of a leaf~$\ell$ storing $m$ keys costs us \Oh{m q k/w} time to move $\ell$'s keys to $\ell$'s $q$ siblings,
such that we cannot charge a leaf deletion with the gained capacity of these siblings unless $qk/w = \Oh{1}$.

As a remedy, 
we drop the lower bound on the number of keys a leaves may store, i.e., we delete a leaf only when it becomes empty,
and further impose the invariant:
\begin{equation}
		\text{Among the $q$ siblings of every non-full leaf, there is at most another non-full leaf.}
	\tag{Inv}\label{eqInv}
\end{equation}
\newcommand*{\myInvariant}{(\ref{eqInv})}
Let us first see why \myInvariant{} helps us to solve the problem regarding the space; subsequently we show how to sustain \myInvariant{} 
while retaining our operational time complexity:
By the definition of \myInvariant{}, for every $q$ subsequent leaves, there are at most two leaves that are non-full.
Consequently, these $q$ subsequent leaves store at least $qb - 2b$ keys.
Hence, the number of leaves it at most $\lambda := nq/(bq-2b)$, and
all leaves use at most 
\begin{equation}\label{eqLeavesBitsLarge}
\lambda bk =  nk qb/(qb - 2b) = nkq(q-2) = nk + 2nk/(q-2) = nk + \Oh{nk/\lg n} \text{~bits,}
\end{equation}
conforming with \cref{eqLeavesBits} for $q \in \Ot{\lg n}$.
As long as we sustain \myInvariant{}, it is impossible to delete a leaf without deleting \Om{b} keys. 
Fortunately, we can sustain \myInvariant{} by slightly changing the way we perform a deletion or an insertion of a key:
\begin{description}
	\item[Deletion] When deleting a key from a full leaf~$\ell$ having a non-full leaf~$\ell'$ as one of its $q$ siblings, we shift a key from $\ell'$ to $\ell$ such that $\ell$ is still full after the deletion.
	If $\ell'$ becomes empty, then we delete it.
	All that can be done in the same time bounds as explained for the insertion in \cref{secShiftingKeys} (shifting keys within \Oh{q} circular buffers).
\item[Insertion] Our explanation of insertions already conforms with \myInvariant{}, since we only split a leaf whenever all its $q$ siblings are full.
	In that case, we create precisely two new leaves, each inheriting half of the keys of the old leaf.
	In particular, these two leaves are the only non-full leaves among their $q$ siblings.
\end{description}

\subsection{Final Space Complexity}\label{secFinalSpaceComplexity}

Finally, we can bound the number of internal nodes by the number of leaves~$\lambda$ defined in \cref{secManageLeaf}:
Since the minimum out-degree of an internal node is $t/2$,
there are at most 
\[
	\lambda \sum_{i=1}^\infty (2/t)^k = 2\lambda/(t-2) = 
\Oh{n (q+1) /(qtb)} = \Oh{n / tb} 
\text{~internal nodes.}
\]
Since a node stores $t$ pointers to its children, it uses \Oh{t w} bits.
In total we can store the internal nodes in 
\begin{equation}\label{eqInternalBits}
	\Oh{twn/tb} = \Oh{wn/b} = \Oh{nk / \lg n} \text{~bits.}
\end{equation}
With that and \cref{eqLeavesBitsLarge}, we finally obtain \cref{thmBTree}.

\begin{figure}[t]
	\centering{\includegraphics[scale=1.0]{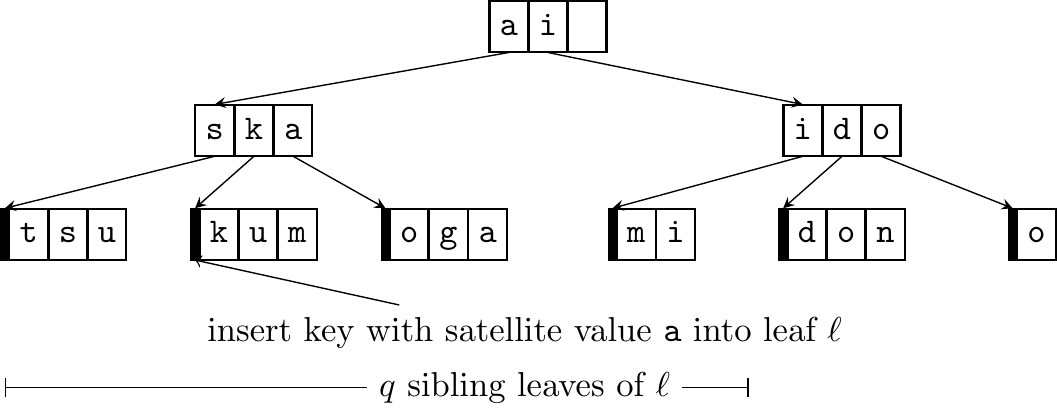}

		\vspace{0.5em}
		\tikz[baseline]{\draw[dashed] (0,.5ex)--++(5,0) ;}
		\\
		$\big\downarrow$
		inserting \texttt{a} into $\ell$
		$\big\downarrow$
		\\\noindent
		\tikz[baseline]{\draw[dashed] (0,.5ex)--++(5,0) ;}
		\vspace{1.5em}

		\includegraphics[scale=1.0]{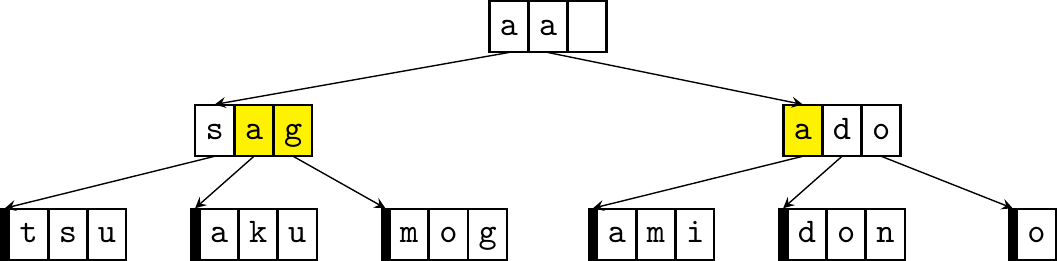}
	}\caption{Change of aggregate values on shifting keys.
	A shift causes the need to recompute the aggregate values of the satellite values stored in a leaf whose contents changed due to the shift.
	The example uses the same B tree structure as \cref{figBTree}, but depicts the satellite values (plain characters) instead of the keys. 
	Here, we used the minimum on the canonical alphabet order as aggregate function.
}
	\label{figAggregateValues}
\end{figure}

\section{Augmenting with Aggregate Values}\label{secAggregatedValues}
As highlighted in the related work, B trees are often augmented with auxiliary data
to support prefix sum queries or LCP queries when storing strings.
We present a more abstract solution covering these cases with \emph{aggregate values}, 
i.e., values composed of the satellite values stored along with the keys in the leaves.
In detail, we augment each node~$v$ with an \teigi{aggregate value} that is the return value of a decomposable aggregate function applied on the satellite values stored in the leaves of the subtree rooted at~$v$.
A \teigi{decomposable aggregate function}~\cite[Sect.~2A]{jesus15aggregation} such as the sum, the maximum, or the minimum, 
is a function~$f$ on a subset of satellite values with a constant-time merge operation~$\cdot_f$ such that,
given two disjoint subsets $X$ and $Y$ of satellite values, $f(X \cup Y) = f(X) \cdot_f f(Y)$, and
the left-hand and the right-hand side of the equation can be computed in the same time complexity.

While sustaining the methods described in the introduction like \fnPred{} for \emph{keys},
we enhance \fnInsert{} to additionally take a value as argument, and 
provide access to the aggregate values:
\begin{description}
	\item[\fnInsert($K$, $V$)] inserts the key~$K$ with satellite value~$V$;
	\item[\fnAccess($v$)] returns the aggregate value of the node~$v$; and
	\item[\fnAccess($K$)] returns the satellite value of the key~$K$.
\end{description}
To make use of $\fnAccess(v)$, the B trie also provides access to the root, and a top-down navigation based on the way $\fnPred(K)$ works, for a key~$K$ as search parameter.

For the computational analysis, let us assume that a satellite value uses \Oh{k} bits,
and that we can compute an aggregate function~$f$ bit-parallel such that it can be computed in $\Oh{bk/w} = \Oh{\lg n}$ time for a leaf storing $b = \Ot{w \lg n / k}$ values.\footnote{This assumption is of practical importance:
  Intrinsic functions like \texttt{\_mm512\_min\_epi32}, \texttt{\_mm512\_max\_epi32}, or \texttt{\_mm512\_add\_epi32}
  compute the component-wise minimum, the maximum or the summation, respectively, of two arrays with 16 integers, each of 32-bits, pairwise in one instruction. 
  (In technical terms, each of the arrays has 512 bits, and thus each stores 16 integers of 32-bits.)
  Assuming 32-bit satellite values, we can pack them into chucks of 512 bits, 
  and compute an aggregated chunk of size 512 bits in bit-parallel.
  To obtain the final aggregate value, there are again functions like
  \texttt{\_mm512\_mask\_reduce\_min\_epi32},
  \texttt{\_mm512\_mask\_reduce\_max\_epi32},
  or
  \texttt{\_mm512\_reduce\_add\_epi32} working within a single instruction.
}

Under this setting, we claim that we can obtain $\Oh{bk/w} = \Oh{\lg n}$ time for every B tree operation
while maintaining the aggregate values,
even if 
we distribute keys among $q$ leaves 
on (a) an insertion of a key into a full leaf or (b) the deletion of a key.
This is nontrivial:
For instance, when maintaining minima as aggregate values, 
if we shift the key with minimal value of a leaf~$\ell$ to its sibling, 
we have to recompute the aggregate value of~$\ell$ (cf.~\cref{figAggregateValues}), which we need to do from scratch (since we do not store additional information about finding the next minimum value).
So a shift of a key to a leaf costs $\Oh{bk/w} = \Oh{\lg n}$ time, 
resulting in $\Oh{q bk/w} = \Oh{\lg^2 n}$ overall time for an insertion.

Our idea is to decouple the satellite values from the leaf arrays where they are actually stored.
To explain this idea, let us conceptually think of the leaf arrays as a global array.
Given our B tree has $\lambda$ leaves, we partitioned this global array into $\lambda$ blocks, 
where the $i$-th block with $i \in [1..\lambda]$ starts initially at entry $1+(i-1)b$,
corresponds to the $i$-th leaf,
and has initially the same size as its corresponding leaf.
Our idea is to omit the updates of the minimum satellite value at the leaves by (a) extending or shrinking these blocks and (b) moving the block boundaries.
We no longer make the aggregate value of a leaf dependent on its leaf array, but instead let it depend on the values in its corresponding block. 
When shifting keys in the leaf arrays during an insertion or deletion of a key (cf.\ \cref{secManageLeaf}), 
we want roughly all blocks to keep the same contents by shifting the block boundaries adequately such that we only need to recompute a constant number of aggregate values stored by the leaves per update.

To track the boundaries of the blocks, 
we augment each leaf~$\ell$ with an offset value and the current size of its block.
The offset value stores the relative offset of the block with respect to the initial starting position of the block (equal to the starting position of $\ell$'s leaf array) within the global array.
We decrement the offset by one if we move a key from $\ell$ to $\ell$'s preceding sibling, 
while we increment its offset by one if we move a key of $\ell$'s preceding leaf to $\ell$.
By maintaining offset and size of the block of each leaf, we can lookup the contents of a block.
In summary, we can decouple the aggregate values with the aid of the blocks in the global array, and therefore can 
use the techniques introduced in \cref{secSpaceEfficient},
where we shift keys among $q+1$ sibling leaves, without the need to recompute the aggregate values when shifting keys.
For instance, if we shift a key from a leaf~$\ell$ to the succeeding sibling node, we still keep this key in the block of $\ell$
by incrementing its size as well as the offset of the succeeding leaf's block by one.
If we only care about insertions (and neither about deletions and blocks becoming too large) we are done since we can update $f(X)$ to $f(X \cup \{x\})$ in constant time for a new satellite value~$x \not\in X$ per definition.
However, deletions pose a problem for the running time
because we usually cannot compute $f(X \setminus \{x\})$ from $f(X)$ with $x \in X$ in constant time. 
Therefore, we have to recompute the aggregate value of a block by considering all its stored satellite values.
However, unlike leaves whose sizes are upper bounded by $b$, 
blocks can grow beyond $\om{b}$. 
The latter case makes it impossible to bound the time for recomputing the aggregate value of a block by \Oh{\lg n}.
In what follows, we show that we can retain logarithmic update time, first with a simple solution taking \Oh{\lg n} time amortized,
and subsequently with a solution taking \Oh{\lg n} worst case time.

\subsection{Updates in Batch}\label{secAggregatedBatch}
Our amortized solution takes action after a node split occurs, where it adjusts the blocks of all $q+2$ nodes that took part in that split (i.e., the full node, its $q$ full siblings and the newly created node).
The task is to evenly distribute the block sizes, 
reset the offsets, and recompute the aggregate values.
We can do all that in $\Oh{q (bk/w + \lg n)} = \Oh{\lg^2 n}$ time, since there are \Oh{q} leaves involved, 
and each leaf 
\begin{itemize}
	\item stores at most $b$ values, whose aggregate value can be computed in $\Oh{bk/w} = \Oh{\lg n}$ time, and it
	\item has \Oh{\lg n} ancestors whose aggregate values may need to be recomputed.
\end{itemize}
Although the obtained \Oh{\lg^2 n} time complexity seems costly, 
we have increased the total capacity of the $b+2$ nodes involved in the update by $\Ot{b}$ keys in total.
Consequently, before splitting one of those nodes again, we perform at least $b = \Om{\lg n}$ insertions (remember that we split a node only if it and its $q$ siblings are full).
Now, if a size of a block still becomes larger than $2b$, 
then we can afford the above rearrangement costing \Oh{\lg n} amortized time.

\begin{figure}
	\begin{tabular}{*{5}{p{0.17\textwidth}}}
		\multicolumn{1}{c}{valid} & \multicolumn{1}{c}{valid} & \multicolumn{1}{c}{invalid} & \multicolumn{1}{c}{invalid} \\
		\includegraphics[width=\linewidth]{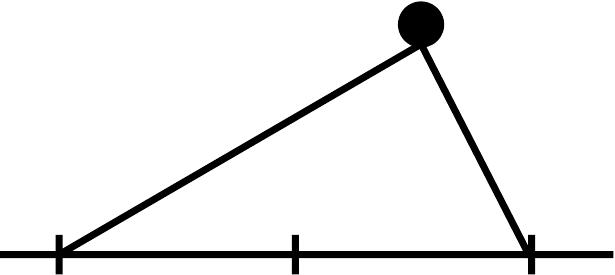} &
		\includegraphics[width=\linewidth]{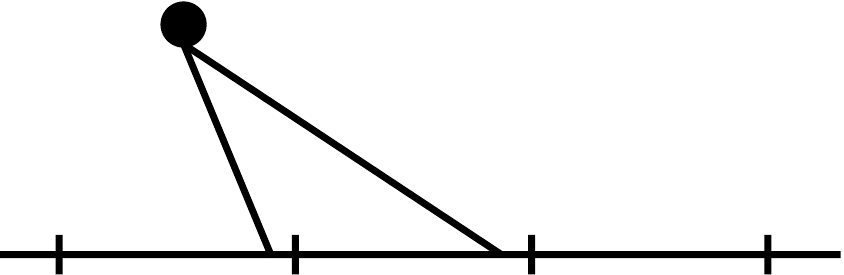} &
		\includegraphics[width=\linewidth]{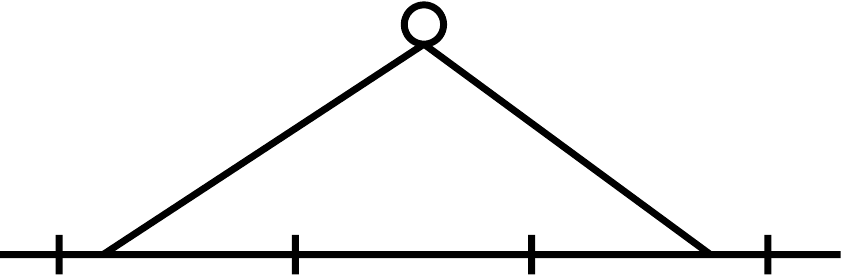} &
		\includegraphics[width=\linewidth]{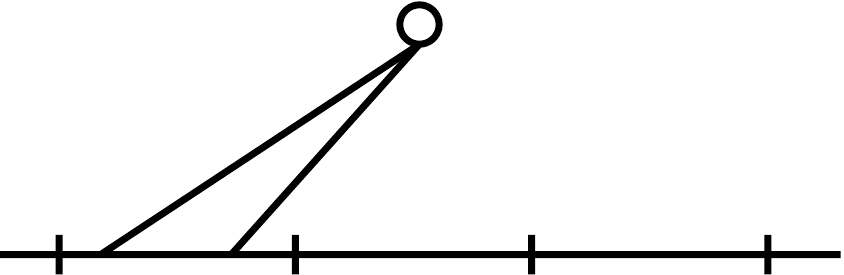} &
		\includegraphics[width=\linewidth]{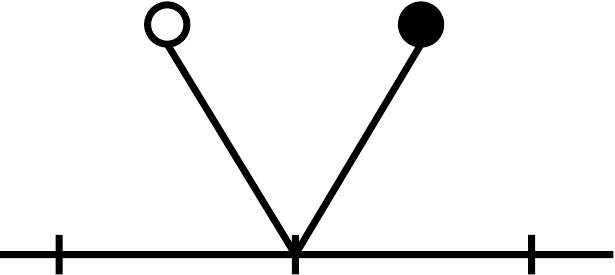} \\
	\end{tabular}
	\centering{}\caption{Valid and invalid blocks according to the definition given in \cref{secMergeUpdate}.
	The (conceptual) global array is symbolized by a horizontal line.
	The leaf arrays are intervals of the global array separated by vertical dashes.
	A dot symbolizes a leaf~$\ell$ and the intersection of the triangle spawning from~$\ell$ with the global array symbolizes the block of~$\ell$.
A node has an invalid block if its dot is hollow. The rightmost picture shows the border case that a block is invalid if its offset is $b$, while a block can be valid even if it is empty.}
\label{figInvBucket}
\end{figure}

\begin{figure}
	\centering{\includegraphics[width=\linewidth]{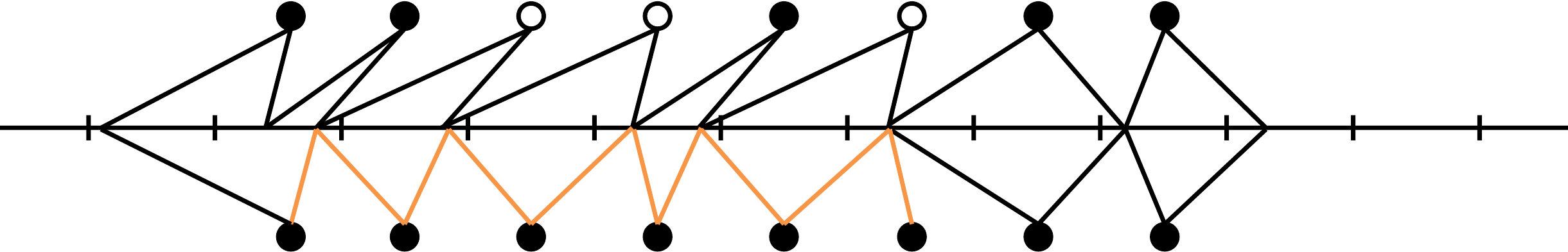}
	}\caption{Revalidation of multiple invalid blocks. The figure uses the same pictography as \cref{figInvBucket}, but additionally shows on the bottom (vertically mirrored) the outcome of our algorithm fixing the invalid blocks (\cref{secMergeUpdate}), where we empty the rightmost invalid block and swap the blocks until we find a block that can be merged with the previous block.
	}
	\label{figRevalidation}
\end{figure}

\subsection{Updates by Merging}\label{secMergeUpdate}
To improve the time bound to \Oh{\lg n} worst case time, 
our trick is to merge blocks and reassign the ownership of blocks to sibling leaves.
For the former, a merge of two blocks means that we have to combine two aggregate values,
but this can be done in constant time by the definition of the decomposable aggregate function.
To keep the size of the blocks within \Oh{b}, we watch out for blocks that become too large, which we call invalid (see \cref{figInvBucket} for a visualization).
We say a block is \teigi{valid} if
\begin{itemize}
	\item it covers at most $2b$ keys,
	\item has an offset in $(-b..b)$ (i.e., the block starts within the leaf array of the preceding leaf or of its corresponding leaf), and
	\item the sum of offset and size is at most $b$ (i.e., the block ends within the leaf array of its corresponding leaf or its succeeding leaf).
\end{itemize}
If one of those conditions becomes violated, we say that the block is \teigi{invalid}, and we take action to restore its validity.
Such a block has become invalid due to a tree update, which already costed \Oh{\lg n} time (the time for a root-to-leaf traversal).
Our goal is to rectify the invalid block~$B_i$ within the same time bound.
The first thing we do is to try to rebalance~$B_i$ with its predecessor and successor block.
To keep the following analysis simple, we only look at the preceding blocks because we can deal with the succeeding blocks symmetrically.
Now, if $B_{i-1}$ can cover $B_i$ without becoming invalid, we move all contents of $B_i$ to $B_{i-1}$ and make $B_i$ empty.
If $B_{i-1}$ would become invalid, we let $B_{i-1}$ cover the contents $B_i$ and recursively select the next preceding block $B_{i-2}$ to check whether it can merge with the old contents of $B_{i-1}$ without becoming invalid (cf.~\cref{figRevalidation}).
Since each visit of a block takes constant time (either swapping or merging contents), and we need to visit at most \Oh{q} such nodes in both directions (otherwise, all $q$ sibling nodes are full, and a split should have occurred), fixing the invalid block takes $\Oh{q} = \Oh{\lg n}$ time.

\section{Further Optimizations}
In what follows, we present features that can be applied upon the techniques introduced in the previous section.
We first start in \cref{secAcceleration} with an acceleration to \Oh{\lg n / \lg \lg n} time per B tree operation by using a sophisticated dictionary for selecting a child of a node in a top-down traversal.
Next (\cref{secCompressingKeys}), we show that we can use our B tree in conjunction with a compression of the keys stored in each leaf if the keys are plain integers indexed in the natural order.
Finally, we show that our B tree variant inherits the worst case I/O complexity of the classic B tree in the external memory model when adjusting the parameters $b, t$, and $q$.

\subsection{Acceleration with Dynamic Arrays}\label{secAcceleration}
We can accelerate the solutions of \cref{secSpaceEfficient,secAggregatedValues} by spending insignificantly more space in the lower term:

\begin{theorem}\label{thmBTreeFast}
	There is a B tree representation storing $n$ keys, each of $k$ bits, in $nk + \Oh{nk \lg \lg n/ \lg n}$ bits of space,
	supporting all B tree operations in \Oh{\lg n / \lg \lg n} time.
\end{theorem}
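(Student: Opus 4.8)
The plan is to push the bottleneck of a B tree operation---the root-to-leaf traversal---below $\Oh{\lg n}$ by replacing the linear (or binary) scan of the comparator array $I_v$ inside each internal node with a faster child-selection structure, and simultaneously to enlarge the branching factor $t$ so that the tree height drops. Concretely, I would set $t := \Ot{\lg^\varepsilon n}$ for a suitable constant $\varepsilon \in (0,1)$ (or, following \citet{he10succinct}, $t$ polylogarithmic), so that the height becomes $\Ot{\lg n / \lg\lg n}$; then the traversal visits only $\Oh{\lg n / \lg\lg n}$ internal nodes, and the target time follows provided child selection at each node costs $\Oh{1}$ (or at worst $\Oh{\lg\lg n}$, which would still need to be amortized away---so $\Oh{1}$ amortized/expected per node is what I actually want).

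The key steps, in order: (1) Re-examine which per-node operations a B tree update performs---child selection on the way down, and on the way back up the possible split/merge/shift of $\Oh{q}$ sibling leaves and the update of $\Oh{q+\lg n}$ ancestors---and check that every step is still supported when $t$ is non-constant. Leaf arrays are still circular buffers of $b := w\lg n/k$ keys, so insertions into a leaf array cost $\Oh{bk/w}=\Oh{\lg n}$, which is \emph{too slow} for the new target; this is the first place the analysis must be fixed (see below). (2) Equip each internal node's comparator array $I_v$ of $t-1$ keys with a dynamic predecessor/successor structure supporting search, insert and delete in $\Oh{1}$ time per operation on a universe of $t=\Ot{\lg^\varepsilon n}$ keys---this is exactly the ``sophisticated dictionary'' alluded to, e.g.\ an atomic-heap / fusion-tree-style structure or a small dynamic array with $w$-bit word tricks, valid because $t$ keys of $\Oh{w}$ bits each fit in $\Oh{t}$ words and $\lg t = \Oh{\lg\lg n}$. (3) Re-run the space accounting: there are now $\Oh{n/(tb)}$ internal nodes, each using $\Oh{tw}$ bits for child pointers plus $\Oh{tw}$ bits for the child-selection structure, for $\Oh{nw/b}=\Oh{nk/\lg n}$ bits total from pointers---but we should double-check whether the dictionary carries an extra $\Oh{\lg\lg n}$ factor somewhere, which is presumably the source of the $\Oh{nk\lg\lg n/\lg n}$ term in the statement; most likely it comes from the leaves, not the internal nodes (e.g.\ storing a word-aligned length or a small index per block, or from the $q$-sibling invariant with the new parameters), and I would pin that down carefully. (4) Re-check the aggregate-value machinery of \cref{secAggregatedValues}: the block/offset bookkeeping and the merge-based revalidation of \cref{secMergeUpdate} must still run in $\Oh{q}=\Oh{\lg n}$ --- and this again is too slow, so $q$ must also be retuned.

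The main obstacle is that naively the per-leaf costs ($\Oh{bk/w}=\Oh{\lg n}$ for a leaf-array insertion, $\Oh{q}=\Oh{\lg n}$ for shifting among siblings and for revalidating an invalid block) do not shrink just because the tree got shorter, so merely speeding up child selection is not enough---one must also shrink $b$ and $q$ to $\Ot{w\lg n/(k\lg\lg n)}$ and $\Ot{\lg n/\lg\lg n}$ respectively, and then re-verify \emph{every} bound that previously read ``$=\Oh{\lg n}$'' (\cref{eqLeavesBits}, \cref{eqLeavesBitsLarge}, \cref{eqInternalBits}, the ancestor-update count $\Oh{\lg n + q}$, and the amortized split argument of \cref{secAggregatedBatch} that needs $b=\Om{\lg n}$ insertions between splits). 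Shrinking $b$ while keeping the leaves filled enough to stay succinct is the delicate part: with a smaller $b$ the lower-order space term degrades, and the whole point is to show it degrades only by a $\lg\lg n$ factor, i.e.\ that choosing $q=\Ot{\lg n/\lg\lg n}$ gives leaf space $nk(q+1)/q = nk + \Oh{nk\lg\lg n/\lg n}$ while simultaneously making every $\Oh{q}$ or $\Oh{bk/w}$ step run in $\Oh{\lg n/\lg\lg n}$. Once the parameter choice $b,t,q$ is fixed consistently and each of the previously-$\Oh{\lg n}$ subroutines is checked to be $\Oh{\lg n/\lg\lg n}$ under it, \cref{thmBTreeFast} follows by exactly the same structural arguments as \cref{thmBTree}.
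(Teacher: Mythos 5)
Your proposal takes essentially the same approach as the paper: set $t=\Theta(\lg^\epsilon n)$ and equip each internal node with an $\Oh{1}$-time dynamic predecessor structure (the paper uses the dynamic array of Raman, Raman and Rao, built on the Q-heap, which needs an $\Oh{n^{\epsilon'}}$-bit universal table), then retune $q\gets\lg n/\lg\lg n$ and $b\gets w\lg n/(k\lg\lg n)$ so that both the root-to-leaf traversal and the per-leaf word-packed operations cost $\Oh{\lg n/\lg\lg n}$. One small correction to your side remark: the $\lg\lg n$ degradation in the lower-order space term comes from \emph{both} the leaves (via the smaller $q$ in \cref{eqLeavesBitsLarge}) and the internal nodes (via the smaller $b$ in \cref{eqInternalBits}), not primarily the leaves.
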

The idea is basically the same as 
of \citet[Lemma~1]{he10succinct}, 
who used a dynamic array data structure of \citet[Thm.~1]{raman01partialsum}, which is an application of the \emph{Q-heap} of \citet{fredman94transdichotomous}.
This dynamic array stores $\Oh{\lg^\epsilon n}$ keys in $\Oh{w \lg^{\epsilon} n}$ bits of space, and supports updates and predecessor queries, both in constant time.
It can be constructed in \Oh{\lg^\epsilon n} time, but requires a precomputed universal table of size \Oh{n^{\epsilon'}} bits
for a constant~$\epsilon' > 0$.
To make use of this dynamic array, we fix the degree~$t$ of the B tree to $t \gets \lg^\epsilon n$, 
and augment each internal node with this dynamic array to support adding a child to a node or searching a child of a node in constant time,
despite the fact that $t$ is no longer constant.
With the new degree~$t \in \Oh{\lg^\epsilon n}$, the height of our B tree becomes $\Oh{\log_t n} = \Oh{\lg n / \lg \lg n}$ 
such that we can traverse from the root to a leaf node in \Oh{\lg n / \lg \lg n} time.
Creating or removing an internal node costs $\Oh{t}$ time,
or \Oh{1} time amortized since a node stores $t/2$ to $t$ children (and hence we can charge an internal node with its $\Ot{t}$ children).

To obtain overall \Oh{\lg n / \lg \lg n} time for all leaf operations, 
we limit
\begin{itemize}
	\item the number of neighbors~$q$ to consider for node splitting or merging by setting $q \gets \lg n / \lg \lg n$, and
	\item the number of keys stored in a leaf by $b \gets w \lg n / (k \lg \lg n)$.
\end{itemize}
An insertion into a circular buffer maintaining $b$ keys can therefore be conducted in $\Oh{bk/w} = \Oh{\lg n / \lg \lg n}$ time.
Overall, adjusting $q$ and $b$ improves the running times of the B tree operations to \Oh{\lg n / \lg \lg n} time,
but increases the space of the B tree:
Now, the leaves need $nk + \Oh{nk \lg \lg n/ \lg n}$ bits according to \cref{eqLeavesBitsLarge} with $q \in \Oh{\lg n / \lg \lg n}$.
Also, the number of internal nodes increases, 
and consequently the space needed for storing the internal nodes becomes $\Oh{twn/tb} = \Oh{wn/b} = \Oh{nk \lg \lg n/ \lg n}$ bits according to \cref{eqInternalBits}.

We can accelerate also the computation in the setting of \cref{secAggregatedValues} where we maintain aggregate values:
There, we can compute the aggregate value of a leaf storing~$b = \Ot{w \lg n / (k \lg \lg n)}$ values in $\Oh{bk/w} = \Oh{\lg n / \lg \lg n}$ time.
Also, since a leaf has \Oh{\lg n / \lg \lg n} ancestors, the solution of \cref{secAggregatedBatch} conducts an update operation in \Oh{\lg n/ \lg \lg n} amortized time.
The solution of \cref{secMergeUpdate} also works in \Oh{\lg n/ \lg \lg n} time due to our choice of $b$ and $q$.

\subsection{Compressing the Keys}\label{secCompressingKeys}
In case that the keys are plain $k$-bit integers, we can store the keys with a universal coding to save space.
For that, we can follow the idea of~\cite[Section~2]{blandford04compact} and \citet{delpratt07compressed} by 
storing the keys by their differences in an encoded form such as Elias-$\gamma$ code or Elias-$\delta$ code~\cite[Sect.~V]{elias75universal}.
A leaf represents its $b' \in [1..b]$ keys by storing the first key in its plain form using $k$ bits, 
but then each subsequent key~$K_i$ by the encoded difference $K_i - K_{i-1}$, for $i \in [2..b']$.

If our trie stores the keys~$K_1, \ldots, K_n$,
then the space for the differences 
is $\sum_{i = 2}^n (2\lg (K_{i} - K_{i-1})+1) = \Oh{n \lg ((K_n+n) /n)}$ bits when using Elias-$\gamma$~\cite[Lemma~3.1]{blandford04compact} or Elias-$\delta$ code~\cite[Equation~(1)]{delpratt07compressed}.
Storing the first key of every leaf takes additional $\lambda k = nqk/(bq-2b) = \Oh{n/\lg n}$ bits.
Similar to \citet{blandford04compact},
we can replace $nk$ bits with $\Oh{n \lg ((K_n+n) /n)} = \Oh{n \lg ((2^k+n) /n)}$ bits with this compression
in our space bounds of \cref{thmBTree,thmBTreeFast},
where we implement the circular buffers as resizeable arrays.

A problem is that we applied word-packing techniques when searching a key in a leaf;
but now each leaf uses a variable amount of bits.
Under the special assumption that the word size $w$ is $\Ot{\lg n}$, i.e., the transdichotomous model~\cite{fredman94transdichotomous},
a solution is to use a lookup table~$f$ for decoding all keys stored in a bit chunk~\cite[Lemma~2]{blandford04compact}.
The lookup table~$f$ takes as input
a key~$K_i$ and a bit chunk of $(\lg n) / 2$ bits storing the keys $K_{i+1}, \cdots$ in encoded form (the first bits representing $K_{i+1}-K_{i}$).
$f$ outputs all keys that are stored in~$B$ fitting into $\lg n / 2$ bits, 
plus an $\Oh{\lg \lg n}$-bits integer storing the number of bits read from~$B$  
(in the case that the limited output space does not contain all keys stored in~$B$).
$f$ can be stored in $\Oh{k \lg \lg n \lg n 2^{(\lg n) /2}} = \oh{n}$ bits.  
$f$ can decode keys fitting into $\lg n / 2$ bits in constant time.
With~$f$, we can find the (insertion) position of a key in a circular buffer in the same time complexity as in the uncompressed version.
When inserting a key having a successor, we need to update the stored difference of this successor, but this can be done in constant time.
We can shift keys to the left and right regarding their bit positions within a circular buffer like in the uncompressed version, since we do not need to uncompress the keys.

\subsection{External Memory Model}\label{secExternalMemory}
We briefly sketch that our proposed variant inherits the virtues of the B tree in the external memory (EM) model~\cite{aggarwal88iomodel}.
For that, let us assume that we can transfer a page of $B$ words or $Bw$ bits
between the EM and the RAM of size $M > B$ words in a constant number of I/O operations (I/Os).
We assume that $Bw$ is much smaller than $n k$, otherwise we can maintain our data in a constant number of pages and need \Oh{1} I/Os for all B tree operations.
The classic B tree (and most of its variants) exhibit the property that every B tree operation can be 
processed in \Oh{\log_B n} page accesses, which is worst case optimal.
Since the EM model is orthogonal and compatible with the word RAM model,
we can improve the practical performance of the B tree by packing more keys into a single page.
We can translate our techniques to the EM model as follows:
First, we set the degree to $t \in \Ot{B}$ such that (a) an internal node fits into a constant number of pages, 
and (b) the height of our B tree is $\Ot{\log_B n}$.
Consequently, a root-to-leaf traversal costs us \Oh{\log_B n} I/Os.
If we set the number of keys a leaf can store to $b \gets wB \log_B n / k$,
then a leaf uses $wB \log_B n$ bits, or $\log_B n$ pages.
This space is maintained by a circular buffer as before, supporting insertions in \Oh{\log_B n} I/Os and
the insertion or deletion of the last or the first element in \Oh{1} I/Os.
Plugging $b =  wB \log_B n / k$ into \cref{eqLeavesBitsLarge} gives $nk + \Oh{nk / \log_B n}$ bits, which we use for storing the leaves together with the keys.
Consequently, the space of the internal nodes becomes $nk / B \log_B n$ bits (cf.~\cref{secFinalSpaceComplexity}).
For augmenting the B tree with aggregate values as explained in \cref{secAggregatedValues}, we assume that 
satellite values can be stored in \Oh{k} bits, and that an aggregate value of \Oh{1} pages of satellite values can be computed in \Oh{1} I/Os.
Unfortunately, maintaining the aggregate values naively as explained at the beginning of \cref{secAggregatedValues} comes with the cost 
of recomputing the aggregate value of a leaf, which is \Oh{\log_B n} I/Os. 
So an insertion costs us \Oh{\log_B^2 n} I/Os, 
including the cost for updating the \Oh{\log_B n} aggregate values of the ancestors of those leaves whose aggregate values have changed.
However, we can easily adapt our proposed techniques in the EM model:
For the amortized analysis \cref{secAggregatedBatch}, we can show that we perform this naive update costing \Oh{\log_B^2 n} I/Os 
after \Oh{\log_B n} operations.
Finally, the approach using merging (\cref{secMergeUpdate}) can be translated nearly literally to the EM model.
Hence, we can conclude that our space-efficient B tree variant obeys the optimal worst cast bound of \Oh{\log_B n} page accesses per operation as the classic B tree.

\section{Sparse Suffix Tree Representation}\label{secSparseSuffixTree}

\newcommand*{\stackL}     {\ensuremath{\instancename{stack}_{\textup{L}}}}
\newcommand*{\stackR}     {\ensuremath{\instancename{stack}_{\textup{R}}}}

\newcommand*{\SLCP}    {\ensuremath{\instancename{SLCP}}}
\newcommand*{\SISA}    {\ensuremath{\instancename{SISA}}}
\newcommand*{\SSA}     {\ensuremath{\instancename{SSA}}}
\newcommand*{\SAVL}     {\ensuremath{\instancename{SAVL}}}

\newcommand*{\lcp}{\ensuremath{\functionname{lcp}}}

\newcommand*{\cla}[1]{\ensuremath{\functionname{cla}_{#1}}}
\newcommand*{\cra}[1]{\ensuremath{\functionname{cra}_{#1}}}
\newcommand*{\ca}[1]{\ensuremath{\functionname{ca}_{#1}}}

\newcommand*{\migi}{\texttt{right}}
\newcommand*{\hidari}{\texttt{left}}

\newcommand*{\joui}{\ensuremath{\functionname{top}}}
\newcommand*{\node}{\ensuremath{\functionname{node}}}
\newcommand*{\kachi}{\ensuremath{\functionname{value}}}
\newcommand*{\fukusa}{\ensuremath{\functionname{depth}}}
\newcommand*{\target}{\ensuremath{\instancename{anc}}}
\newcommand*{\parent}{\ensuremath{\functionname{parent}}}

\newcommand{\ChotenRoot}[3]{\node [choten] {#1 \nodepart{lower} #2\hspace{0.5em}#3  }}
\newcommand{\Choten}[3]{node [choten] {#1 \nodepart{lower} #2\hspace{0.5em}#3  }}

Given a text~$T$, we can store starting positions of its suffixes
as keys in a B tree and use the lexicographic order of the suffixes as the order to sort the respective starting positions.
By augmenting each stored key
with the length of the LCP with the preceding key as satellite value, 
and using the minima as aggregate values stored in the internal nodes, we can represent the sparse suffix tree by our proposed B~tree.
In concrete terms, let us fix a text $T[1..n]$ of length~$n$.
Given a set of text positions~$P = \{p_1, \ldots, p_m\}$ with $p_i \in [1..n]$,
the sparse suffix array~$\SSA[1..m]$ is a ranking of $P$ with respect to the suffixes starting at the respective positions,
i.e., $T[\SSA[i]..] \prec T[\SSA[i+1]..]$ for $i \in [1..n-1]$ with $\{\SSA[1], \ldots, \SSA[m]\} = P$.
Its inverse $\SISA{}[1..n]$ is (partially) defined by $\SISA[\SSA[i]] = i$ (we only need this array conceptionally, and only care about the entries determined by this equation).
The sparse LCP array \SLCP{} stores the length of the LCP of each suffix stored in \SSA{} with its preceding entry, i.e., $\SLCP[1] = 0$ and
$\SLCP[i] := \lcp(T[\SSA[i-1]..], T[\SSA[i]..])$ for all integers~$i \in [2..|\SSA|]$.
See \cref{figExampleString} for an example of the defined arrays for $n=m$.
Finally, the \teigi{sparse suffix tree} is the compacted trie storing all suffixes~$T[p_i..]$, which can be represented by \SSA{} and \SLCP{}.

In particular, our B~tree can represent the sparse suffix tree dynamically.
By dynamic we mean $P$ to be dynamic, i.e., the support of adding or removing starting positions of suffixes to the tree while the input text is always kept static.
Dynamic sparse suffix sorting is a well-received problem that is actively investigated~\cite{i14sparse,fischer20deterministic,prezza18sparse,birenzwige20locally}.
Similarly, the suffix AVL tree~\cite{irving03suffixavl} can represent the dynamic sparse suffix tree.
The suffix AVL tree (\SAVL) is a balanced binary tree that supports dynamic operations by using a pointer-based tree topology (a formal definition follows).
Although we can retrieve \SSA{} from \SAVL{} with a simple Euler tour, 
we show in the following that retrieving \SLCP{} is also possible, but nontrivial;
this is an open problem posed in \cite[Sect.~4.7]{dissKopplDominik}.
The latter can be made trivial with a variation of \SAVL{} using more space for storing additional LCP information at each node~\cite[Sect.~4.6]{dissKopplDominik}.
We stress that, while supporting the same operations as \SAVL{}, our B~tree topology can be succinctly represented in $\oh{nk}$ bits. 
Another benefit of the B tree is that we can read \SLCP{}
from the satellite values stored in leaf arrays from left to right.
In what follows, we show that we can obtain, nevertheless, \SLCP{} from \SAVL{} by a tree traversal:

\begin{theorem}
  We can compute \SLCP{} from \SAVL{} in time linear to the number of nodes stored in \SAVL{}.
\end{theorem}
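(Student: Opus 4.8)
The plan is to traverse \SAVL{} once, visiting its nodes in in-order (equivalently, in the order $\SSA[1],\ldots,\SSA[|\SAVL|]$), and to compute for each visited node the length of the longest common prefix of its suffix with the suffix of the previously visited node (its in-order predecessor); by the definition of \SLCP{} this length is exactly the \SLCP{}-entry of that node. Recall that \SAVL{} is a binary search tree whose keys are text positions ordered by the lexicographic rank of the suffixes starting there, so that an in-order walk yields \SSA{}, and that every node~$v$ stores, besides its text position~$p_v$, the value $m(v):=\lcp(T[p_v..],T[p_{\parent(v)}..])$, i.e.\ the LCP of its suffix with that of its parent.

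I would realise the walk as an explicit depth-first search maintaining the two stacks \stackL{} and \stackR{}: descending into a left child pushes the current node onto \stackR{}, descending into a right child pushes it onto \stackL{}, and the appropriate stack is popped on backtracking. Then at any moment \stackL{} holds precisely the ancestors of the current node~$v$ having $v$ in their right subtree, ordered from shallow to deep, and \stackR{} those having $v$ in their left subtree. Hence $\joui(\stackL)$ is the in-order predecessor of~$v$ whenever $v$ has no left child, while otherwise the in-order predecessor of~$v$ is the previously visited node, namely the rightmost node of $v$'s left subtree; in either case the number we must emit for~$v$ is the \lcp{} of $T[p_v..]$ with the suffix of that node.

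To obtain these \lcp{} values without character comparisons in the typical case, I maintain alongside the walk the two quantities $\ell:=\lcp(T[p_v..],T[p_u..])$ with $u=\joui(\stackL)$, and $r:=\lcp(T[p_v..],T[p_{u'}..])$ with $u'=\joui(\stackR)$, storing on each stack entry the value these had when the entry was pushed, so that a pop restores them. The updates use only the range-minimum property $\lcp(\alpha,\gamma)=\min(\lcp(\alpha,\beta),\lcp(\beta,\gamma))$ for $\alpha\prec\beta\prec\gamma$. For instance, stepping from~$v$ to its left child~$c$ leaves $\joui(\stackL)$ unchanged and sets the new~$r$ to $m(c)$ exactly, while $\ell_{\mathrm{old}}=\min(\ell_{\mathrm{new}},m(c))$ forces $\ell_{\mathrm{new}}=\ell_{\mathrm{old}}$ as long as $m(c)>\ell_{\mathrm{old}}$; only in the degenerate case $m(c)=\ell_{\mathrm{old}}$ is $\ell_{\mathrm{new}}$ left undetermined by the stored data (we only know $\ell_{\mathrm{new}}\ge\ell_{\mathrm{old}}$), and then we extend the common prefix by comparing the two suffixes in~$T$ character by character until the first mismatch. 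Stepping to a right child and popping either stack are handled symmetrically---exact except for an analogous equality case. When the walk visits a node, the \lcp{} with its in-order predecessor is now on hand: it is the current value of~$\ell$ if the node has no left child, and otherwise it is the value of~$r$ that was current when the rightmost node of its left subtree (a node without a right child, from which we carry this value forward) was visited. That value is written to \SLCP{} through \SISA{}.

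Correctness is a routine induction over the walk steps using the range-minimum property, and one checks that the stored $m(\cdot)$ values alone (together with the tree shape) do not in general determine \SLCP{}, so the occasional character comparisons are genuinely necessary. The walk makes $O(|\SAVL|)$ navigation steps, each costing constant time apart from those comparisons, so the crux---and the main obstacle---is to bound their total number by $O(|\SAVL|)$, which is far below the $\sum_i \SLCP[i]$ a naive recomputation would cost (this sum can be $\Theta(n\cdot|\SAVL|)$, e.g.\ for $T=\mathtt{a}^n$ with $|\SAVL|=n/2$). I plan to establish this with a potential argument: every matching character comparison raises one of $\ell,r$ strictly above the lower bound already guaranteed by the stored $m(\cdot)$ values, and this excess is trapped beneath the stack entry pushed by the descent that created it, being released when that entry is popped or overwritten when the walk next descends on the opposite side (where the value is clamped down to a stored $m(\cdot)$); making this charging balance, so that the net growth of the potential over the whole traversal stays $O(|\SAVL|)$, is the delicate point I expect to have to work out carefully.
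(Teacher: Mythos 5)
Your algorithmic skeleton---an Euler tour that maintains, on stacks of ancestors, the LCP of the current node with $\cla{v}$ and with $\cra{v}$, propagated via the three-string identity $\lcp(X,Z)=\min(\lcp(X,Y),\lcp(Y,Z))$---is exactly the paper's. But there is a genuine gap at what you yourself identify as the crux. It originates in a misreading of what \SAVL{} stores: a node does \emph{not} store $\lcp(T[v..],T[\parent(v)..])$; it stores the pair $(d_v,m_v)$ where $m_v=\lcp(T[v..],T[\ca{v}..])$ and the direction $d_v\in\{\hidari,\migi\}$ is chosen so that $m_v$ is the \emph{larger} of $\lcp(T[v..],T[\cla{v}..])$ and $\lcp(T[v..],T[\cra{v}..])$. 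This maximality is precisely what eliminates your ``degenerate equality case.'' Since $T[\cra{v}..]\prec T[v..]\prec T[\cla{v}..]$, \cref{lemLCPtransitivity} gives $\lcp(T[\cla{v}..],T[\cra{v}..])=\min\bigl(\lcp(T[v..],T[\cla{v}..]),\lcp(T[v..],T[\cra{v}..])\bigr)$, and because $m_v$ is by construction the maximum of the two, the \emph{other} one is the minimum and hence equals the stacked value $\lcp(T[\cla{v}..],T[\cra{v}..])$ exactly---there is never any slack of the form ``$\ell_{\mathrm{new}}\ge\ell_{\mathrm{old}}$, value undetermined.'' This is the content of \cref{lemAncestorLCP}, and it makes every step of the traversal an exact constant-time computation with no character comparisons at all.

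Consequently the part of your argument that remains unproven---the potential-function bound of $O(|\SAVL|)$ on the total number of character comparisons---is both unnecessary and, as sketched, not established: you observe yourself that popping a stack entry restores the \emph{old}, coarser value of $\ell$ or $r$, which discards the refined knowledge your comparisons paid for, so the proposed charging does not obviously balance and the linear time bound is not proven. Your side claim that the stored values together with the tree shape ``do not in general determine \SLCP{}'' is true for parent-LCPs but false for what \SAVL{} actually stores. To close the gap, replace the comparison-based refinement by \cref{lemAncestorLCP}: descending to a left child $x$ of $v$ set $\lcp(T[\cla{x}..],T[\cra{x}..])=\lcp(T[v..],T[\cra{v}..])$ (and symmetrically for right children), push that value, and read off the \SLCP{} entries from \ref{RuleLeft}, \ref{RuleRight}, \ref{RuleEmpty} and the two scenarios exactly as you describe.
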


\begin{figure}
	\centering{\begin{tabular}{l*{15}{c}}
			\toprule
$i$     & 1          & 2          & 3          & 4          & 5          & 6          & 7          & 8          & 9          & 10         & 11         & 12         & 13         & 14         & 15       
			\\\midrule
$T$     & \texttt{c} & \texttt{a} & \texttt{a} & \texttt{t} & \texttt{c} & \texttt{a} & \texttt{c} & \texttt{g} & \texttt{g} & \texttt{t} & \texttt{c} & \texttt{g} & \texttt{g} & \texttt{a} & \texttt{c}  \\
$\SSA$   & 2          & 14         & 6          & 3          & 15         & 1          & 5          & 11         & 7          & 13         & 12         & 8          & 9          & 4          & 10 \\
$\SISA$   & 6          & 1          & 4          & 14         & 7          & 3          & 9          & 12         & 13         & 15         & 8          & 11         & 10         & 2          & 5 \\
$\SLCP$ & 0 & 1          & 2          & 1          & 0          & 1          & 2          & 1          & 3          & 0          & 1          & 2          & 1          & 0          & 2 \\
rules & E &A &L &L &A &D &R &A &L &A &L &L &R &D &A
			\\\bottomrule
		\end{tabular}
	}\caption{The example string $T = \texttt{caatcacggtcggac}$ used in~\cite{irving03suffixavl}. The row \emph{rules} shows from which rule or scenario (cf.~\cref{secSparseSuffixTree}) the LCP value was obtained.}
	\label{figExampleString}
\end{figure}

\subsection{The Suffix AVL Tree}
Given a set of text positions~$P$,
the \teigi{suffix AVL tree} represents each suffix~$T[p..]$ starting at a text position~$p \in P$ by a node. 
The nodes are arranged in a binary search tree topology such that reading the nodes with an in-order traversal gives the sparse suffix array.
It can take the shape of an AVL tree, which is a balanced binary tree.
For that to be performant, each node stores extra information:

Given a node~$v$ of \SAVL{},
$\cla{v}$ (resp.\ $\cra{v}$) is the lowest node having $v$ as a descendant in its left (resp.\ right) subtree.
We write $T[v..]$ for the suffix represented by the node~$v$, i.e., we identify nodes with their respective suffix starting positions.
Each node~$v$ stores a tuple~$(d_v, m_v)$, where $m_v$ is 
\begin{itemize}
   \item $\lcp(T[v..], T[\cla{v}..])$ if $d_v = \hidari$ and $\cla{v}$ exists,
   \item $\lcp(T[v..], T[\cra{v}..])$ if $d_v = \migi$ and $\cra{v}$ exists, or
   \item $0$ if $d_v = \bot$.
\end{itemize}
The value of~$d_v \in \{\hidari,\migi,\bot\}$ is set such that $m_v$ is maximized.
Let $\ca{v}$ be $\cla{v}$ (resp.\ $\cra{v}$) if $d_v = \hidari$ (resp.\ $d_v = \migi$).
If $d_v = \bot$, then $\ca{v}$ as well as $\cla{v}$ and $\cra{v}$ are not defined.
See \cref{figSAVL} for an example.

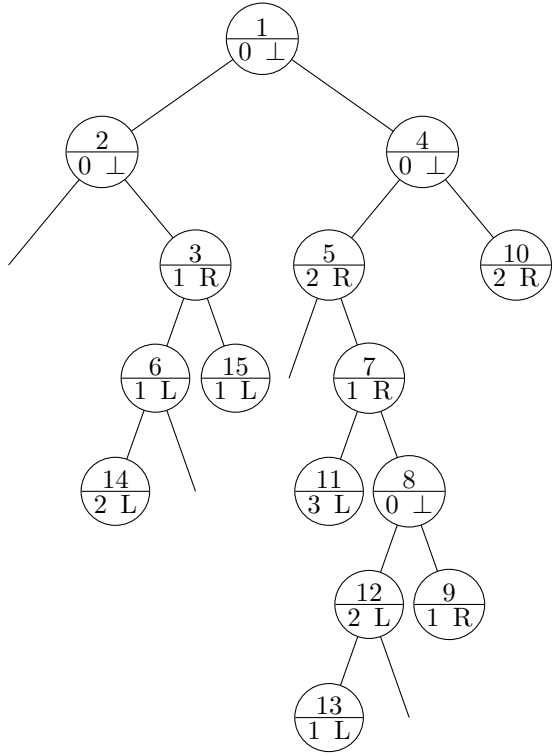
\begin{figure}[t]
	\centering{\begin{tikzpicture}[
		level 1/.style={sibling distance=12em},
		level 2/.style={sibling distance=7em}, 
		level 3/.style={sibling distance=3em}, 
	choten/.style={state with output,
		draw,
		inner sep=0.1em,
		outer sep=0, 
		}
]

\ChotenRoot{1}{0}{$\bot$}
child {\Choten{2}{0}{$\bot$} 
	child {}
	child {\Choten{3}{1}{R}
		child {\Choten{6}{1}{L}
			child {\Choten{14}{2}{L}
			}
			child {}
		}
		child {\Choten{15}{1}{L}
		}
	}
}
child {\Choten{4}{0}{$\bot$}
	child{\Choten{5}{2}{R}
		child{}
		child{\Choten{7}{1}{R}
			child {\Choten{11}{3}{L}
			}
			child{\Choten{8}{0}{$\bot$}
				child {\Choten{12}{2}{L}
					child {\Choten{13}{1}{L}
					}
					child {}
				}
				child {\Choten{9}{1}{R}
				}
			}
		}
	}
	child {\Choten{10}{2}{R}
	}
};
\end{tikzpicture}
	}\caption{The unbalanced \SAVL{} of the string
		$T = \texttt{gcaatcacggtcggac}$
		defined in \cref{figExampleString} when inserting all text positions in increasing order.
		A node consists of its position~$v$ (top), $m_v$ (bottom left) and $d_v$ (bottom right) abbreviated to L and R for \hidari{} and \migi{}, respectively.
	}
	\label{figSAVL}
\end{figure}

\subsection{Computing the Sparse LCP Array}

Since an \SAVL{} node does not necessarily store the LCP with the lexicographically preceding suffix, 
it is not obvious how to compute \SLCP{} from \SAVL{}.
For computing \SLCP{} from \SAVL{},
we use the following two facts and a lemma:
\begin{itemize}
   \item $T[\cra{v}..] \prec T[v..] \prec T[\cla{v}..]$ (in case that $\cla{v}$ and $\cra{v}$ exist)
   \item During an Euler tour (an in-order traversal), we can compute \SSA{} by reading the nodes in-order. 
      We can additionally keep track of the in-order rank~$\SISA[v]$ of each node~$v$. 
   \end{itemize}

\begin{lemma}[{\cite[Lemma 1]{irving03suffixavl}}]\label{lemLCPtransitivity}
	Given three strings~$X,Y,Z$ with the lexicographic order $X \prec Y \prec Z$, we have
	$\lcp(X,Z) = \min\left(\lcp(X,Y), \lcp(Y,Z)\right)$.
\end{lemma}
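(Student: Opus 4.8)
The plan is to prove the two inequalities $\lcp(X,Z) \ge \min(\lcp(X,Y),\lcp(Y,Z))$ and $\lcp(X,Z) \le \min(\lcp(X,Y),\lcp(Y,Z))$ separately; only the second one will use the hypothesis $X \prec Y \prec Z$.

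For the first inequality, put $\mu := \min(\lcp(X,Y),\lcp(Y,Z))$. Since $\lcp(X,Y) \ge \mu$ and $\lcp(Y,Z) \ge \mu$, all three strings have length at least $\mu$, and $X[1..\mu] = Y[1..\mu]$ as well as $Y[1..\mu] = Z[1..\mu]$; hence $X[1..\mu] = Z[1..\mu]$, i.e., $\lcp(X,Z) \ge \mu$. This half needs neither the lexicographic order nor any further length considerations.

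For the second inequality it suffices to establish $\lcp(X,Z) \le \lcp(X,Y)$ and $\lcp(X,Z) \le \lcp(Y,Z)$; the two arguments are mirror images of one another, so I describe the first. Assume for contradiction that $p := \lcp(X,Y) < \lcp(X,Z)$. Then $|X| > p$ and $|Z| > p$, so the positions up to $p+1$ exist in both, and $X[1..p] = Z[1..p]$ with $Z[p+1] = X[p+1]$. Since $\lcp(X,Y) = p < |X|$, the string $X$ is not a prefix of $Y$, leaving two cases: either $|Y| = p$, making $Y$ a proper prefix of $X$ and hence $Y \prec X$, which contradicts $X \prec Y$; or position $p+1$ is the first disagreement between $X$ and $Y$, in which case $X \prec Y$ forces $X[p+1] < Y[p+1]$. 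In this surviving case we obtain $Y[1..p] = X[1..p] = Z[1..p]$ together with $Z[p+1] = X[p+1] < Y[p+1]$, whence $Z \prec Y$, contradicting $Y \prec Z$. The bound $\lcp(X,Z) \le \lcp(Y,Z)$ is obtained by an analogous argument, starting instead from $q := \lcp(Y,Z) < \lcp(X,Z)$ and arriving at a contradiction. Combining the two bounds with the first inequality yields $\lcp(X,Z) = \min(\lcp(X,Y),\lcp(Y,Z))$.

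The only point requiring care is the bookkeeping around strings that happen to be proper prefixes of one another, which I expect to be the \emph{main (though still routine) obstacle}: every such degenerate situation is dispatched by the observations that $\lcp(X,Z) > p$ forces both $|X|$ and $|Z|$ to exceed $p$ (so the relevant characters actually exist), and that a proper prefix of a string is lexicographically strictly smaller than that string.
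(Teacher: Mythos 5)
Your proof is correct. Note that the paper does not prove this lemma at all --- it is imported verbatim as Lemma~1 of Irving and Love \citep{irving03suffixavl} --- so there is no in-paper argument to compare against; your two-inequality decomposition (the easy direction $\lcp(X,Z)\ge\min(\lcp(X,Y),\lcp(Y,Z))$ needing no ordering, and the converse by contradiction using $X \prec Y \prec Z$, with the proper-prefix degeneracies handled separately) is the standard, complete way to establish it.
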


   With the following rules, we can partially compute \SLCP{}:
   \begin{description}
	 \item[\CustomLabel{RuleLeft}{Rule~L}]
	    If $d_v = \hidari$ and the right sub-tree of $v$ is empty, then $\SLCP[\SISA[v]+1] = m_v$ (since $\ca{v} = \cla{v} = \SSA[\SISA[v]+1]$ is the starting position of the lexicographically next larger suffix with respect to the suffix starting with~$v$). 

\item[\CustomLabel{RuleRight}{Rule~R}]
If $d_v = \migi$ then $\SLCP[\SISA[v]] \ge m_v$ since $v$ shares
a prefix of at least $m_v$ with the lexicographically (not necessarily next) smaller suffix $\cra{v}$.
If $v$ does not have a left child, then $\SLCP[\SISA[v]] = m_v$ since $\cra{v} = \SSA[\SISA[v]-1]$ in this case.

	 \item[\CustomLabel{RuleEmpty}{Rule~E}]
If $v$ does not have a left child and $\cra{v}$ does not exist, then $\SLCP[\SISA[v]] = 0$.
This is the case when $T[v..]$ is the smallest suffix stored in \SAVL{}.
\end{description}

To compute all \SLCP{} values, there remain two scenarios:
\begin{description}
	 \item[\CustomLabel{ScenarioDescendant}{Scenario~D}]
	If a node~$v$ has a left child, then we have to compare $v$ with the rightmost leaf in $v$'s left subtree because
		this leaf corresponds to the lexicographically preceding suffix of the suffix starting with~$v$.
	 \item[\CustomLabel{ScenarioAncestor}{Scenario~A}]
		 Otherwise, this lexicographically preceding suffix corresponds to $\cra{v}$, such that we have to compare $\cra{v}$ with $v$.
		 If $d_v = \migi$, we are already done due to~\ref{RuleRight} since $\ca{v} = \cra{v}$ in this case (such that the answer is already stored in $m_v$).
\end{description}

We cope with both scenarios by an Euler tour on \SAVL{}.
For~\ref{ScenarioDescendant}, we want to know $\lcp(T[v..], T[\cla{v}..])$ for each leaf~$v$ regardless of whether~$d_v = \hidari$ or not.
For~\ref{ScenarioAncestor}, we want to know $\lcp(T[v..], T[\cra{v}..])$ for each node~$v$ regardless of whether~$d_v = \migi$ or not.
We can obtain this lcp information by the following lemma:

\begin{lemma}\label{lemAncestorLCP}
	Given 
	$\lcp(T[\cla{v}..], T[\cra{v}..])$, 
	$\lcp(T[v..], T[\ca{v}..])$, and 
	$d_v$,
  we can compute 
  $\lcp(T[v..], T[\cla{v}..])$ and 
  $\lcp(T[v..], T[\cra{v}..])$ in constant time.
\end{lemma}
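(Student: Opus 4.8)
The plan is to do a straightforward case analysis on the value of $d_v \in \{\hidari, \migi, \bot\}$, using the transitivity of longest common prefixes expressed in \cref{lemLCPtransitivity} together with the chain of lexicographic inequalities $T[\cra{v}..] \prec T[v..] \prec T[\cla{v}..]$. The input gives us one of the two target lcp values directly (the one with $\ca{v}$), and the known ordering lets us recover the other from the supplied value $\lcp(T[\cla{v}..], T[\cra{v}..])$.

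First I would dispatch on $d_v$. If $d_v = \hidari$, then by definition $\ca{v} = \cla{v}$, so $\lcp(T[v..], T[\cla{v}..]) = \lcp(T[v..], T[\ca{v}..])$ is exactly the value we are handed; it remains to compute $\lcp(T[v..], T[\cra{v}..])$. Here I apply \cref{lemLCPtransitivity} to the triple $X = T[\cra{v}..] \prec Y = T[v..] \prec Z = T[\cla{v}..]$, which gives $\lcp(T[\cra{v}..], T[\cla{v}..]) = \min(\lcp(T[\cra{v}..], T[v..]), \lcp(T[v..], T[\cla{v}..]))$. We know the left-hand side (it is one of the inputs) and we know the second term of the $\min$ (it equals $\lcp(T[v..], T[\ca{v}..])$), so we can solve for $\lcp(T[\cra{v}..], T[v..])$: since $d_v = \hidari$ was chosen to maximize $m_v$, we have $\lcp(T[v..], T[\cla{v}..]) \ge \lcp(T[v..], T[\cra{v}..])$, hence the $\min$ is attained at the second argument, i.e.\ $\lcp(T[v..], T[\cra{v}..]) = \lcp(T[\cla{v}..], T[\cra{v}..])$. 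The case $d_v = \migi$ is symmetric, with the roles of $\cla{v}$ and $\cra{v}$ interchanged: now $\lcp(T[v..], T[\cra{v}..])$ is given, and $\lcp(T[v..], T[\cla{v}..]) = \lcp(T[\cla{v}..], T[\cra{v}..])$ by the same argument using the maximality of $m_v$ in the other direction.

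The remaining case is $d_v = \bot$, where $\cla{v}$, $\cra{v}$, and $\ca{v}$ are all undefined, so there is nothing to compute; I would note this explicitly so the statement is vacuously satisfied in that case (and in the boundary cases where only one of $\cla{v}, \cra{v}$ exists, the corresponding single lcp is simply the input value or $0$). Every step is a constant-time arithmetic comparison or assignment, so the claimed time bound is immediate.

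The only subtle point — the step I expect to need the most care — is justifying that the $\min$ in \cref{lemLCPtransitivity} is resolved the way I claim, which hinges on reading off from the definition of $d_v$ that $m_v$ is the \emph{larger} of $\lcp(T[v..], T[\cla{v}..])$ and $\lcp(T[v..], T[\cra{v}..])$. Once that is spelled out, the value $\lcp(T[\cla{v}..], T[\cra{v}..])$ equals the \emph{smaller} of the two, so the non-$\ca{v}$ lcp is recovered as exactly $\lcp(T[\cla{v}..], T[\cra{v}..])$ with no further computation. I would also remark that $\lcp(T[\cla{v}..], T[\cra{v}..])$ is itself something the Euler-tour algorithm will have available (it is produced when $\cla{v}$ and $\cra{v}$ are processed), so this lemma fits the intended use in the algorithm of \cref{secSparseSuffixTree}.
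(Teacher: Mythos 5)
Your proof is correct and takes essentially the same route as the paper: case analysis on $d_v$, observing that $\lcp(T[v..], T[\ca{v}..])$ already gives one of the two targets since $\ca{v}$ coincides with $\cla{v}$ or $\cra{v}$, and recovering the other from $\lcp(T[\cla{v}..], T[\cra{v}..])$ via \cref{lemLCPtransitivity} and the maximality of $m_v$, which resolves the $\min$. (One small wording slip: in the case $d_v = \hidari$ the minimum $\min\bigl(\lcp(T[\cra{v}..], T[v..]), \lcp(T[v..], T[\cla{v}..])\bigr)$ is attained at the \emph{first} argument, not the second, though the equation you conclude is correct; you are also more explicit than the paper about where the inequality comes from and about the $d_v=\bot$ and boundary cases, which is fine.)
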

\begin{proof}
  If $d_v = \hidari$,
  \begin{itemize}
  	\item $\lcp(T[v..], T[\cla{v}..]) = \lcp(T[v..], T[\ca{v}..])$ since $\cla{v} = \ca{v}$, and
	\item $\lcp(T[v..], T[\cra{v}..]) = \lcp(T[\cla{v}..], T[\cra{v}..])$.
  \end{itemize}
  The latter is because of $T[\cra{v}..] \prec T[v..] \prec T[\cla{v}..]$ (assuming \cla{v} and \cra{v} exist) and
	\begin{align*}
		\lcp(T[\cra{v}..], T[\cla{v}..]) 
		&= \min(\lcp(T[\cra{v}..], T[v..]), \lcp(T[v..], T[\cla{v}..]))  \\
		&= \lcp(T[v..], T[\cra{v}..]) 
		\le \lcp(T[v..], T[\cla{v}..])
	\end{align*}
		according to \cref{lemLCPtransitivity}.
The case $d_v = \migi$ is symmetric:
\begin{itemize}
	\item $\lcp(T[v..], T[\cla{v}..]) = \lcp(T[\cla{v}..], T[\cra{v}..])$, and
		\item $\lcp(T[v..], T[\cra{v}..]) = \lcp(T[v..], T[\ca{v}..])$.
\end{itemize}
\end{proof}

With \cref{lemAncestorLCP}, we can keep track of $\lcp(T[\cla{v}..], T[\cra{v}..])$ while descending the tree from the root:
Suppose that we know $\lcp(T[\cla{v}..], T[\cra{v}..])$ and $v$'s left (resp.\ right) child is $x$ (resp.\ $y$).
\begin{itemize}
	\item Since $\cla{x} = v$ and $\cra{x} = \cra{v}$, $\lcp(T[\cla{x}..], T[\cra{x}..]) = \lcp(T[v..], T[\cra{v}..])$.
	\item Since $\cra{y} = v$ and $\cla{y} = \cla{v}$, $\lcp(T[\cla{y}..], T[\cra{y}..]) = \lcp(T[v..], T[\cla{v}..])$.
\end{itemize}
During an Euler tour, we keep the values $\lcp(T[\cla{u}..], T[\cra{u}..])$ in a stack for the ancestors $u$ of the current node.
By applying the above rules and using the lcp information of \cref{lemAncestorLCP} for both scenarios, we can compute the SLCP array during a single Euler tour.
This algorithm can also traverse non-balanced SAVL-trees in linear time.

\section{Conclusion}
We provided a space-efficient variation of the B tree that retains the time complexity of the standard B tree.
It achieves succinct space under the setting that the keys are incompressible.
Our main tools were the following:
First, we generalized the B$^*$ tree technique to exchange keys not only with a dedicated sibling leaf but with up to $q$ many sibling leaves.
Second, we let each leaf store \Ot{b} elements represented by a circular buffer such that moving a largest (resp.\ smallest) element of a leaf to its succeeding (resp.\ preceding) sibling can be performed in constant time.
Additionally, we could augment each node with an aggregate value and maintain these values, either with a batch update weakening the worst case time complexities to amortized time, or
with a blocking of the leaf arrays that can be maintained within the worst case time complexities.
All B tree operations can be accelerated by larger degrees~$t$ in conjunction with a smaller leaf array size~$b$ and the data structure of \citet{raman01partialsum} storing the children of an internal node,
resulting in smaller heights but more internal nodes,
which is reflected with a small increase in the lower term space complexity.
Finally, we have shown how to obtain \SLCP{} from \SAVL{} with an Euler tour storing LCP information on a stack sufficient for constructing \SLCP{} in constant time per visited \SAVL{} node.

\clearpage
\bibliographystyle{plainnat}

\begin{thebibliography}{28}
\providecommand{\natexlab}[1]{#1}
\providecommand{\url}[1]{\texttt{#1}}
\expandafter\ifx\csname urlstyle\endcsname\relax
  \providecommand{\doi}[1]{doi: #1}\else
  \providecommand{\doi}{doi: \begingroup \urlstyle{rm}\Url}\fi

\bibitem[Aggarwal and Vitter(1988)]{aggarwal88iomodel}
Alok Aggarwal and Jeffrey~Scott Vitter.
\newblock The input/output complexity of sorting and related problems.
\newblock \emph{Commun. {ACM}}, 31\penalty0 (9):\penalty0 1116--1127, 1988.

\bibitem[Bayer and McCreight(1970)]{bayer70organization}
Rudolf Bayer and Edward~M. McCreight.
\newblock Organization and maintenance of large ordered indexes.
\newblock In \emph{Proc.\ SIGFIDET}, pages 107--141, 1970.

\bibitem[Bille et~al.(2018)Bille, Christiansen, Cording, G{\o}rtz,
  Skjoldjensen, Vildh{\o}j, and Vind]{bille18partial}
Philip Bille, Anders~Roy Christiansen, Patrick~Hagge Cording, Inge~Li G{\o}rtz,
  Frederik~Rye Skjoldjensen, Hjalte~Wedel Vildh{\o}j, and S{\o}ren Vind.
\newblock Dynamic relative compression, dynamic partial sums, and substring
  concatenation.
\newblock \emph{Algorithmica}, 80\penalty0 (11):\penalty0 3207--3224, 2018.

\bibitem[Birenzwige et~al.(2020)Birenzwige, Golan, and
  Porat]{birenzwige20locally}
Or~Birenzwige, Shay Golan, and Ely Porat.
\newblock Locally consistent parsing for text indexing in small space.
\newblock In \emph{Proc.\ SODA}, pages 607--626, 2020.

\bibitem[Blandford and Blelloch(2004)]{blandford04compact}
Daniel~K. Blandford and Guy~E. Blelloch.
\newblock Compact representations of ordered sets.
\newblock In \emph{Proc.\ SODA}, pages 11--19, 2004.

\bibitem[Comer(1979)]{comer79ubiquitous}
Douglas Comer.
\newblock The ubiquitous {B}-tree.
\newblock \emph{{ACM} Comput. Surv.}, 11\penalty0 (2):\penalty0 121--137, 1979.

\bibitem[Delpratt et~al.(2007)Delpratt, Rahman, and
  Raman]{delpratt07compressed}
O'Neil Delpratt, Naila Rahman, and Rajeev Raman.
\newblock Compressed prefix sums.
\newblock In \emph{Proc.\ SOFSEM}, volume 4362 of \emph{LNCS}, pages 235--247,
  2007.

\bibitem[Dietz(1989)]{dietz89optimal}
Paul~F. Dietz.
\newblock Optimal algorithms for list indexing and subset rank.
\newblock In \emph{Proc.\ WADS}, volume 382 of \emph{LNCS}, pages 39--46, 1989.

\bibitem[Elias(1974)]{elias74code}
Peter Elias.
\newblock Efficient storage and retrieval by content and address of static
  files.
\newblock \emph{J. {ACM}}, 21\penalty0 (2):\penalty0 246--260, 1974.

\bibitem[Elias(1975)]{elias75universal}
Peter Elias.
\newblock Universal codeword sets and representations of the integers.
\newblock \emph{{IEEE} Trans. Inf. Theory}, 21\penalty0 (2):\penalty0 194--203,
  1975.

\bibitem[Ferragina and Grossi(1999)]{ferragina99stringbtree}
Paolo Ferragina and Roberto Grossi.
\newblock The string {B}-tree: {A} new data structure for string search in
  external memory and its applications.
\newblock \emph{J. {ACM}}, 46\penalty0 (2):\penalty0 236--280, 1999.

\bibitem[Fischer et~al.(2020)Fischer, I, and
  K{\"{o}}ppl]{fischer20deterministic}
Johannes Fischer, Tomohiro I, and Dominik K{\"{o}}ppl.
\newblock Deterministic sparse suffix sorting in the restore model.
\newblock \emph{ACM Trans. Algorithms}, 16\penalty0 (4):\penalty0 50:1--50:53,
  2020.

\bibitem[Franceschini and Grossi(2006)]{franceschini06optimal}
Gianni Franceschini and Roberto Grossi.
\newblock Optimal implicit dictionaries over unbounded universes.
\newblock \emph{Theory Comput. Syst.}, 39\penalty0 (2):\penalty0 321--345,
  2006.

\bibitem[Fredman and Willard(1994)]{fredman94transdichotomous}
Michael~L. Fredman and Dan~E. Willard.
\newblock Trans-dichotomous algorithms for minimum spanning trees and shortest
  paths.
\newblock \emph{J. Comput. Syst. Sci.}, 48\penalty0 (3):\penalty0 533--551,
  1994.

\bibitem[Gonz{\'{a}}lez and Navarro(2009)]{gonzalez09rank}
Rodrigo Gonz{\'{a}}lez and Gonzalo Navarro.
\newblock Rank/select on dynamic compressed sequences and applications.
\newblock \emph{Theor. Comput. Sci.}, 410\penalty0 (43):\penalty0 4414--4422,
  2009.

\bibitem[Graefe(2011)]{graefe11btree}
Goetz Graefe.
\newblock Modern {B}-tree techniques.
\newblock \emph{Foundations and Trends in Databases}, 3\penalty0 (4):\penalty0
  203--402, 2011.

\bibitem[He and Munro(2010)]{he10succinct}
Meng He and J.~Ian Munro.
\newblock Succinct representations of dynamic strings.
\newblock In \emph{Proc.\ SPIRE}, volume 6393 of \emph{LNCS}, pages 334--346,
  2010.

\bibitem[I et~al.(2014)I, K{\"{a}}rkk{\"{a}}inen, and Kempa]{i14sparse}
Tomohiro I, Juha K{\"{a}}rkk{\"{a}}inen, and Dominik Kempa.
\newblock Faster sparse suffix sorting.
\newblock In \emph{Proc.\ STACS}, volume~25 of \emph{LIPIcs}, pages 386--396,
  2014.

\bibitem[Irving and Love(2003)]{irving03suffixavl}
Robert~W. Irving and Lorna Love.
\newblock The suffix binary search tree and suffix {AVL} tree.
\newblock \emph{J. Discrete Algorithms}, 1\penalty0 (5-6):\penalty0 387--408,
  2003.

\bibitem[Jesus et~al.(2015)Jesus, Baquero, and Almeida]{jesus15aggregation}
Paulo Jesus, Carlos Baquero, and Paulo~S{\'{e}}rgio Almeida.
\newblock A survey of distributed data aggregation algorithms.
\newblock \emph{{IEEE} Commun. Surv. Tutorials}, 17\penalty0 (1):\penalty0
  381--404, 2015.

\bibitem[Katajainen and Rao(2010)]{katajainen10compact}
Jyrki Katajainen and S.~Srinivasa Rao.
\newblock A compact data structure for representing a dynamic multiset.
\newblock \emph{Inf. Process. Lett.}, 110\penalty0 (23):\penalty0 1061--1066,
  2010.

\bibitem[Knuth(1998)]{knuthArt3Sorting}
Donald~E. Knuth.
\newblock \emph{The Art of Computer Programming, Volume 3: Sorting and
  Searching}.
\newblock Addison Wesley, Redwood City, CA, USA, 1998.

\bibitem[K\"{o}ppl(2018)]{dissKopplDominik}
Dominik K\"{o}ppl.
\newblock \emph{Exploring regular structures in strings}.
\newblock PhD thesis, TU Dortmund, 2018.

\bibitem[Munro and Nekrich(2015)]{munro15compressed}
J.~Ian Munro and Yakov Nekrich.
\newblock Compressed data structures for dynamic sequences.
\newblock In \emph{Proc.\ ESA}, volume 9294 of \emph{LNCS}, pages 891--902,
  2015.

\bibitem[Navarro and Nekrich(2014)]{navarro14dynamic}
Gonzalo Navarro and Yakov Nekrich.
\newblock Optimal dynamic sequence representations.
\newblock \emph{{SIAM} J. Comput.}, 43\penalty0 (5):\penalty0 1781--1806, 2014.

\bibitem[Prezza(2017)]{prezza17dynamic}
Nicola Prezza.
\newblock A framework of dynamic data structures for string processing.
\newblock In \emph{Proc.\ SEA}, volume~75 of \emph{LIPIcs}, pages 11:1--11:15,
  2017.

\bibitem[Prezza(2018)]{prezza18sparse}
Nicola Prezza.
\newblock In-place sparse suffix sorting.
\newblock In \emph{Proc.\ SODA}, pages 1496--1508, 2018.

\bibitem[Raman et~al.(2001)Raman, Raman, and Rao]{raman01partialsum}
Rajeev Raman, Venkatesh Raman, and S.~Srinivasa Rao.
\newblock Succinct dynamic data structures.
\newblock In \emph{Proc.\ WADS}, volume 2125 of \emph{LNCS}, pages 426--437,
  2001.

\end{thebibliography}

\end{document}